\DeclareMathOperator{\vrho}{\varrho}
\def\vrho{\varrho}
\newcommand{\floor}[1]{\left \lfloor #1 \right \rfloor}
\newcommand{\bea}{\begin{eqnarray}}
\newcommand{\eea}{\end{eqnarray}}
\def\bi{\begin{itemize}}
\def\ei{\end{itemize}}
\def\bc{\begin{center}}
\def\ec{\end{center}}
\def\ket#1{|#1\rangle}
\newcommand{\one}{\mbox{$1 \hspace{-1.0mm}  {\bf l}$}}
\def\tr{\mathrm{tr}}
\def\ket#1{\left| #1\right>}
\def\bra#1{\left< #1\right|}
\newcommand{\kb}[2]{\ket{#1}\bra{#2}}
\newcommand{\braket}[2]{\left< #1|#2\right>}
\newcommand{\ketbra}[1]{| #1 \rangle\langle #1 |}
\DeclareMathOperator{\BB}{\mathcal{B}}
\newcommand{\II}{\mathcal I}
\newcommand{\GG}{\mathcal G}
\newcommand{\Sy}{\mathcal S}
\newcommand{\PP}{\mathcal P}
\newcommand{\td}{\widetilde}
\newtheorem{theorem}{Theorem}
\newtheorem{lemma}[theorem]{Lemma}
\newtheorem{observation}[theorem]{Observation}
\begin{document}
\title{Simulating extremal temporal correlations} 

\author{Cornelia Spee}
\affiliation{Institute for Quantum Optics and Quantum Information (IQOQI),
Austrian Academy of Sciences, Boltzmanngasse 3, 1090 Vienna, Austria}
\affiliation{Naturwissenschaftlich-Technische 
Fakult\"at, Universit\"at Siegen, Walter-Flex-Stra{\ss}e 3, 57068 Siegen, Germany}
\author{Costantino Budroni}
\affiliation{Faculty of Physics, University of Vienna, Boltzmanngasse 5, 1090 Vienna, Austria}
\affiliation{Institute for Quantum Optics and Quantum Information (IQOQI),
Austrian Academy of Sciences, Boltzmanngasse 3, 1090 Vienna, Austria}
\author{Otfried G\"uhne}
\affiliation{Naturwissenschaftlich-Technische 
Fakult\"at, Universit\"at Siegen, Walter-Flex-Stra{\ss}e 3, 57068 Siegen, Germany}

\begin{abstract}  
The correlations arising from sequential measurements on a single quantum 
system form a polytope. This is defined by the arrow-of-time (AoT) constraints, 
meaning that future choices of measurement settings cannot influence past outcomes.
We discuss the resources needed to simulate the extreme points of the AoT polytope, 
where resources are quantified in terms of the minimal dimension, or 
``internal memory'' of the physical system. First, we analyze the equivalence 
classes of the extreme points under symmetries. 
Second, we characterize the minimal dimension necessary to obtain a given extreme 
point of the AoT polytope, including a lower scaling bound 
in the asymptotic limit of long sequences.  
Finally, we present a general method to derive dimension-sensitive temporal 
inequalities for longer sequences, based on inequalities for shorter ones, 
and investigate their robustness to imperfections.
\end{abstract}
\maketitle

%%%%%%%%%%%%%%%%%%%%%%%%%%%%%%%%
\section{Introduction}
%%%%%%%%%%%%%%%%%%%%%%%%%%%%%%%%
The study of spatial correlations, from Bell nonlocality~\cite{Bell1964,nonloc_rev} 
to entanglement theory~\cite{ent_revH,ent_revG}, has had, on the one hand, a profound 
impact on the foundations of quantum mechanics. On the other hand, it stimulated plenty 
of applications in quantum information processing, such as quantum key 
distribution~\cite{AcinQKD2007}, randomness certification~\cite{Pironio2010} and expansion~\cite{Colbeck2011}, to mention a few. Moreover, correlations stronger 
than quantum ones, but still obeying the no-signaling constraints~\cite{PopescuFPH1994}, 
have been extensively investigated both from a fundamental perspective
and in relation with applications to quantum information processing.

Similarly, temporal correlations have been studied from the perspective of the 
difference between classical and quantum systems, mostly in the framework of 
Leggett-Garg inequalities~\cite{LeggettPRL1985,EmaryRPP2014} and noncontextuality 
inequalities \cite{KlyachkoPRL2008,CabelloPRL2008} tested via sequential measurements~\cite{Kirchmair2009,GuhnePRA2010}. More recently, a notion of 
non-classical temporal correlations has been formulated also from a different 
perspective that does not require assumptions on the {\it noninvasivity} or 
{\it compatibility} of the measurements \cite{BrierlyPRL2015,BudroniNJP2019}. 
Few quantum information processing tasks have been formulated directly in this 
framework, such as dimension witnesses~\cite{BudroniPRL2014,SchildPRA2015,Hoffmann2018,Spee2020}, 
purity certification~\cite{Spee2019}, and time-keeping devices~\cite{BVW2020}. Many other tasks, despite not 
being directly formulated in the language of temporal correlations, are closely 
related, since they naturally involve sequential operations. This is, for example, the 
case for prepare-and-measure scenarios~\cite{GallegoPRL2010,BrunnerPRL2013}, quantum random access codes (QRACs)~\cite{Wiesner1983,Ambainis1999, Ambainis2002, BowlesPRA2015, AguilarPRL2018, Miklin2019},   classical simulations of quantum
contextuality~\cite{KleinmannNJP2011,Fagundes2017}, quantum simulation of classical 
stochastic processes~\cite{GarnerNJP2017}, memory asymmetry between prediction and retrodiction~\cite{ThompsonPRX2018}, and optimal ticking clocks~\cite{Woods2018}.

The temporal counterpart to the no-signaling constraints~\cite{PopescuFPH1994} are the 
arrow-of-time (AoT) constraints~\cite{ClementePRL2016}, stating the impossibility of 
signaling from the future to the past. These conditions define the AoT polytope~\cite{ClementePRL2016}. 
It has been shown that the extreme points of this polytope are given by the deterministic assignments~\cite{HoffmannThesis2016,AbbottPRA2016,Hoffmann2018}, where each output in
a sequence is obtained as a deterministic function of the previous inputs and outputs. 
This implies that any such point is realizable by sequential measurements on a physical 
system, even for a classical theory, if the internal memory of the system is large enough 
to store the information about previous inputs and outputs. This is in stark contrast to 
the spatial case, where different correlations correspond to different theories, often 
irrespectively of the system dimension. A difference in temporal correlations between 
classical and quantum theory is recovered if the system 
dimension is constrained, as noted already a long time ago in the context of QRACs~\cite{Wiesner1983,Ambainis1999, Ambainis2002}, and similarly those theories can be distinguished from
generalized probability theories (GPTs)~\cite{BudroniNJP2019}. This dimension dependence can be 
exploited to construct temporal inequalities that can certify a lower bound on 
the dimension of the system, i.e. they are dimension
witnesses~\cite{GallegoPRL2010,BrunnerPRL2013,WolfPRL2009,GuehnePRA2014,Hoffmann2018,BudroniNJP2019,Spee2020, BowlesPRA2015}. 

In this paper, we investigate the minimal resources necessary to simulate a given 
extreme point of the AoT polytope. Here, the resource is quantified by the dimension
(or memory) needed to reproduce the outcomes of a measurement sequence. First, 
we study the symmetries of the AoT polytope w.r.t. classical post-processing. 
Then, we determine the minimal dimension required for the realization of a given 
extreme point. As in Refs.~\cite{Hoffmann2018,BudroniNJP2019,Spee2020}, no assumption on the concrete realization or quantum description of a measurement is made, it is only assumed that the same measurement may be carried out at different times. We then continue by providing a simple method to combine dimension-sensitive
temporal inequalities for shorter sequences to obtain inequalities valid for longer 
sequences. Finally, we discuss the robustness of temporal inequalities if the 
measurements are not perfect, i.e. if they vary over time. 

We remark that the simulation of extreme temporal correlations, for a given length and number of inputs and outputs, may or may not be sufficient to simulate {\it all} temporal correlations, depending on the assumption on the resources available. Consider the scenario in which the experimenter has two machines, able to generate the correlations $p_1$ and $p_2$, respectively. At the beginning of each measurement sequence she chooses with probability $\lambda$ the machine $1$ and with probability $1-\lambda$ the machine $2$, and then she proceeds to measure the whole sequences with the initially chosen machine. In this way, she would obtain as correlation $p$ the convex mixture $p=\lambda p_1 + (1-\lambda)p_2$. By straightforwardly extending this argument, one can see that the randomness available in the choice of the initial machine gives rise to all convex mixtures. In particular, by being able to simulate all extreme correlations of a given scenario (i.e., the extreme points of a given polytope), the experimenter can simulate all corresponding temporal correlations (i.e., the convex hull, corresponding to the full polytope). This is the scenario discussed in Ref.~\cite{BudroniNJP2019}. In contrast, if this initial randomness is not an available resource, the set of temporal correlations has a much more complicated structure~\cite{Mao2020}.

The paper is organized as follows. In Sect.~\ref{sec:prel}, we  introduce our notation and the considered scenario in detail and we  review some important previous results. In Sect.~\ref{sec:symm}, we further explore the properties of the AoT polytope, in particular the symmetries of the AoT polytope under classical  post-processing,  i.e., possible relabeling of the inputs and outputs, as such transformations do not affect the quantum realization.  In Sect.~\ref{sec:mindim}, we investigate questions such as how much  memory is required to realize a given extreme point within quantum theory, or stated differently, what is the minimal dimension that is necessary to obtain this correlation.  In Sect.~\ref{sec:lowerb}, we  provide a lower bound on the dimension needed to obtain an arbitrary extreme point and an estimate of the behavior in the asymptotic limit of arbitrary long measurement sequences, for any number of inputs and outputs.  Then, in Sect.~\ref{sec:combining}  we show a general method to 
produce temporal inequalities for longer sequences by combining inequalities for shorter ones. Finally in Sect.~\ref{sec:imperf},  we investigate how robust are our statements on temporal correlations in the case in which the assumption of repeated measurements is only approximately satisfied.

%%%%%%%%%%%%%%%%%%%%%%%%%%%%%%%%%%%%%%%%%%%%%%%%%%%
\section{Notation and Preliminaries}
\label{sec:prel}
%%%%%%%%%%%%%%%%%%%%%%%%%%%%%%%%%%%%%%%%%%%%%%%%%%%

\begin{figure}[t]
\includegraphics[width=0.5\columnwidth]{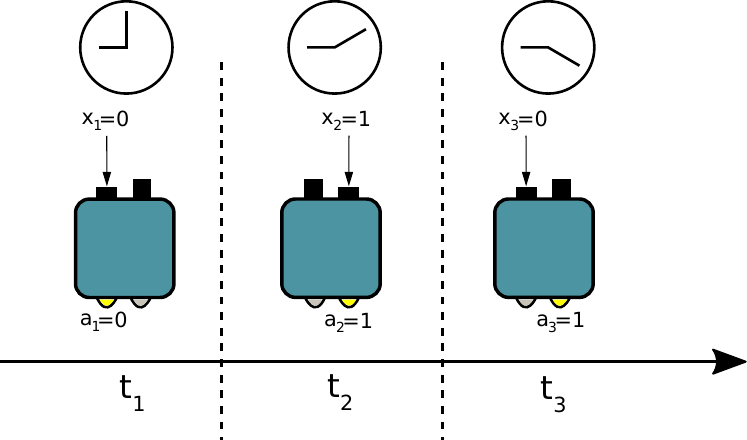} 
\caption{Finite-state machine: A single box is provided an input sequence $x_1 ,x_2, x_3$ and generates an output $a_1, a_2, a_3$ at different instants of time. No external clock/memory is accessible to the box and hence its behavior is solely governed by its internal state. Mathematically, this corresponds to having transformation rules for the internal state of the machine that are time-independent.
\label{fig:setup}}
\end{figure} 

We consider the scenario of sequential measurements depicted in Fig.~\ref{fig:setup}. A box receives a sequences of {\it inputs}, or {\it measurement settings}, $x_1,x_2,\ldots,x_L$ and produces a sequences of {\it outputs}, or {\it measurement outcomes} $a_1,a_2,\ldots,a_L$. The machines works by transforming probabilistically its internal state, e.g., a quantum state $\rho$, according to the measurement input and outcome and generating a measurement outcome according to the input and previous state. We are then interested in the correlations $p(a_1a_2\ldots a_L|x_1x_2\ldots x_L)$. 

More concretely, an operation on a quantum system, associated with an
input $x$, is represented by a quantum instrument, namely a collection of completely 
positive maps $\{\II_{a|x}\}_a$,  that sum up to a unital map, 
i.e., $\sum_a \II_{a|x}(\openone)= \openone$, where $\openone$ denotes the identity operator,
corresponding to the rule of preservation of probability in the Heisenberg picture, 
see, e.g., \cite{HeinosaariZiman2011} for a textbook introduction. 
Each instrument $\{\II_{a|x}\}_a$ defines a generalized measurement, i.e., a positive operator valued measure (POVM), 
through the formula $E_{a|x}:= \II_{a|x}(\openone)$. Correlations for a sequence of inputs $x_1, x_2$ and outputs $a_1, a_2$
are given by the formula
\begin{equation}\label{eq:def_qprob}
p(a_1a_2|x_1x_2)= \tr[\rho\ \II_{a_1|x_1} (E_{a_2|x_2}) ] = \tr[\rho\ \II_{a_1|x_1} \circ \II_{a_2|x_2}(\openone ) ],
\end{equation}
where $\circ$ denotes the composition of maps, and analogous expressions hold for longer sequences. We assume that the evolution of our box is time-independent, except for the external classical inputs that are provided at each time step. Practically, this assumption means two things. First, that the different correlations are generated by the transitions of the internal state of the machine; in quantum mechanical terms, this implies that for a given input $x$ the machine  applies the quantum instrument $\{\II_{a|x}\}_a$ independently of which time step $t$ we are in. This is already implicit in Eq.~\eqref{eq:def_qprob}, since we used only the symbol $\II_{a|x}$ to denote the instruments, without any reference to the time step, e.g., we want to calculate the probability $p(00|00)$ we apply the same mapping twice, i.e., $p(00|00)=\tr [\rho\ \II_{0|0} \circ \II_{0|0}(\openone ) ]$. Second, we assume that the inputs are provided at equally spaced time intervals and the free evolution of the system is always 
implemented by the same quantum channel, e.g., one can think about an evolution governed by a time-independent Hamiltonian. Hence, the time evolution can be reabsorbed wlog into the definition of the quantum instruments. Boxes satisfying these assumptions are called {\it finite-state machines} by generalizing a well-known classical notion~\cite{PazBook2003} (see also Ref.~\cite{BudroniNJP2019} for more details on the quantum and generalized probability theory case). We equivalently say that the measurement operations are {\it time-independent}.

In this scenario with time-ordered measurements, any theory that respects causality must satisfy the so-called {\it arrow of time} (AoT) conditions~\cite{ClementePRL2016}, namely, the future choice of inputs cannot modify the probabilities of past outcomes. For the simple case of a sequence of two measurements, the correlation $p(a_1a_2|x_1x_2)$ must satisfy
\begin{equation}
\sum_{a_2} p(a_1a_2|x_1x_2)= \sum_{a_2} p(a_1a_2|x_1x_2'), \text{ for all } a_1,x_1, x_2, x_2'.
\end{equation}
This condition is analogous to the no-signaling conditions for spatial correlations~\cite{PopescuFPH1994}, but it constrains only one direction, i.e., signaling from the future to the past. These linear constraints, together with positivity, $p(a_1a_2|x_1x_2)\geq 0$, and normalization, $\sum_{a_1a_2} p(a_1a_2|x_1x_2)=1$, define a polytope called the AoT polytope~\cite{ClementePRL2016}, denoted in the general case as $\PP_L^{O,S}$, where $O$ denotes the number of outputs, $S$ the number of inputs (or measurement settings) and $L$ the length of the measurement sequence. 

Such constraints are satisfied by classical and quantum mechanics and it has been proven that all extreme points are given by deterministic assignments, i.e. correlations which have the property that for any input one obtains a deterministic output~\cite{HoffmannThesis2016,AbbottPRA2016,Hoffmann2018}.  Intuitively, this comes from the fact that the AoT constraints allows us to decompose the probability distribution as
\begin{equation}
p(a_1a_2a_3|x_1x_2x_3)=p(a_1|x_1) p(a_2|a_1;x_1x_2) p(a_3|a_1a_2;x_1x_2x_3);
\end{equation}
 the extreme points, then, are given by the products of deterministic functions, generating $a_1, a_2$ and $a_3$ respectively, from the previous inputs and outputs. It has been shown in ~\cite{Hoffmann2018} that the AoT polytope $P_{L}^{O,S}$ has 
 \bea
N_{L}^{O,S}= (O^S)^{\frac{S^L-1}{S-1}}
 \eea extreme points. The extreme points can, then, be reached if the machine has enough ``internal memory'', namely, a large enough set of perfectly distinguishable internal states~\cite{Hoffmann2018}, to remember previous inputs and outputs and generate deterministically the corresponding outputs.

For a sequence of length $L$, an extreme point of the AoT polytope can be represented as a {\it tree graph} with $\sum_{k=0}^{L-1} S^{k}=\frac{S^L-1}{S-1}$ nodes, as depicted in Fig.~\ref{fig:tuple-tree}. The tree graph can be intuitively understood as follows. At each time-step the evolution of our system ``branches'' depending on the received input, e.g., the history in which the system received $0$ departs from the history in which it received $1$, since the internal state of the machine will evolve differently. In this way, we can keep track of all possible sequences of inputs that are obtained up to length $L$ and the corresponding evolution of the internal state. Moreover, since the strategy is deterministic, to each node of the graph corresponds a unique state. Of course, the same state may be used several times in the whole evolution of the system (provided that it deterministically generates the correct sequence of outputs, as we will see later in more detail). Hence, to each node, which we denote with the pair $(l,k)$ where  $l\in\{1,\ldots,L\}$ denotes the time-step, and $k\in\{1,\ldots,S^{l-1}\}$ denotes in which node of the time-step $l$ we are, we can associated a tuple $\Gamma_{l,k}=(z_1,z_2,\ldots,z_{S})$,  where $z_i\in\{0,\ldots,O-1\}$ denotes the (deterministically generated) outcome of the measurement $M_i$. Moreover, the state associated with the node $(l,k)$ encodes also the information on all the subsequent deterministically generated outputs, which motivates the following definition. A (sub)tree $T_{l,k}^r$, called the $r$-{\it length future} of $(l,k)$, is a collection of tuples connected to a root node $(l,k)$, representing the current and future deterministic outcomes. It is defined as  $T_{l,k}^r:= \{\Gamma_{l,k}, \Gamma_{l+1,h_1^{(
1)}},\ldots, \Gamma_{l+1,h_{S}^{(1)}},\Gamma_{l+2,h^{(2)}_1},\ldots, \Gamma_{l+2,h^{(2)}_{S^2}}, \ldots, \Gamma_{l+r,h^{(3)}_{1}}, \ldots,\Gamma_{l+r,h^{(r)}_{S^r}} \}$,
where  $h^{(m)}_l\in \{(k-1) S^{m}+1,\ldots,k S^m\}$;  we denote $T_{l,k}^{L-l}$ simply as $T_{l,k}$ and call it the {\it future} of $(l,k)$. See Fig. \ref{fig:tuple-tree} for more details. 

\begin{figure}[h]
\begin{center}
\includegraphics[width=0.95\columnwidth]{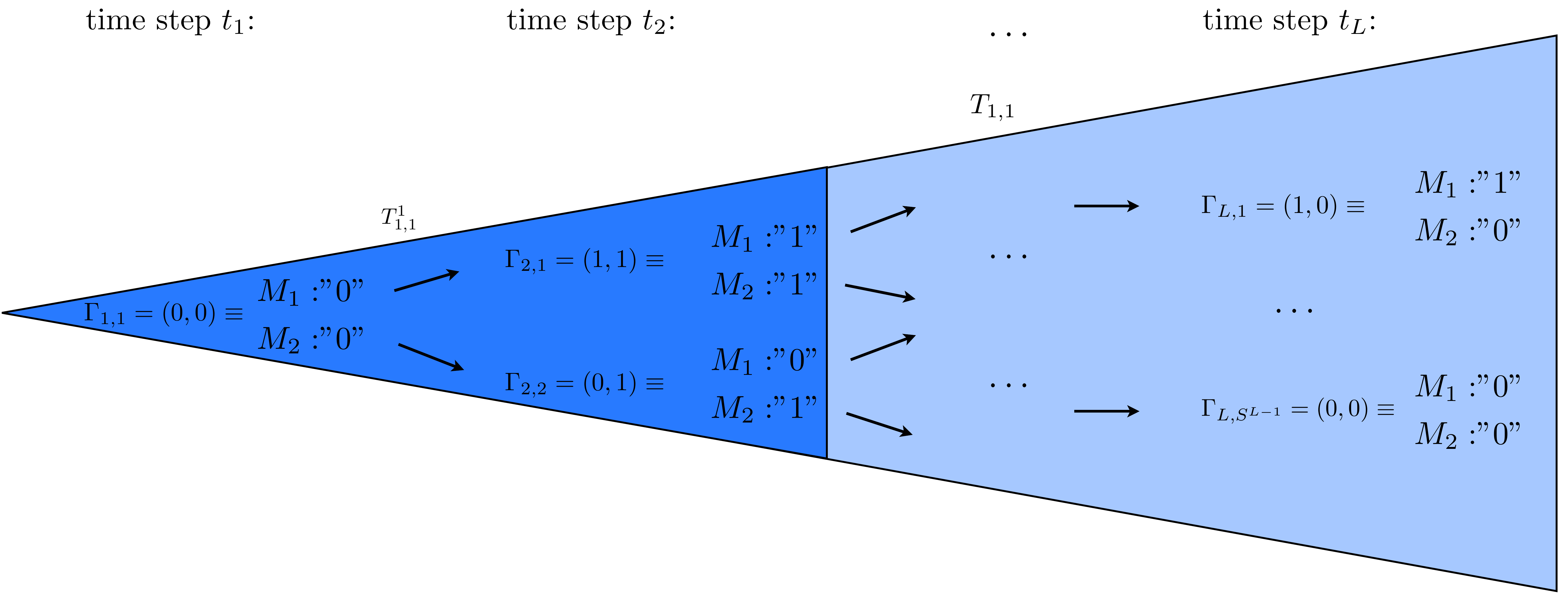} 
\end{center}
\caption{Assignment of tuples for an extreme point of $\PP_L^{2,2}$.  In  this example,  for both measurements the outcome ``$0$'' is obtained in the first time step. Then, then evolution branches, depending whether $M_1$ or $M_2$ has been measured, giving, the branches with $\Gamma_{2,1}$ or with $\Gamma_{2,2}$, respectively. After measuring $M_1$ in the first time step, one obtains in the second time step for both measurements deterministically the outcome ``$1$'' (corresponding to the tuple $\Gamma_{2,1}$) and after performing $M_2$ in the first time step one observes for measurement $M_1$ ($M_2$) in the second time step the outcome ``$0$'' (``$1$'') respectively (corresponding to the tuple $\Gamma_{2,2}$. The $1$-length future of $(1,1)$, i.e., $T_{1,1}^1$ is indicated by the dark blue triangle, and the history of the whole sequence, $T_{1,1}$ is represented by the triangle obtained by joining the dark blue and light blue regions.}
\label{fig:tuple-tree}
\end{figure}

To each node $(l,k)$  corresponds an internal state $\rho_{l,k}$ (quantum or classical) of the machine that generates deterministically the tuple of outcomes $\Gamma_{l,k}$. We remark that classical and quantum states are able to generate the same deterministic strategies, hence the distinction is at this point irrelevant (for more details see Sec. \ref{sec:mindim}). Moreover, since the procedure is deterministic, to the same state must correspond the same sequence of future outcomes, namely,
\begin{equation}\label{eq:equiv_state_future}
\rho_{l,k} = \rho_{l',k'} \Rightarrow T_{l,k}^r = T_{l',k'}^r, \text{ for } r=\min\{ L-l, L-l'\}.
\end{equation}
If two tuples $\Gamma_{l,k}$ and  $\Gamma_{l',k'}$ satisfy $T_{l,k}^r = T_{l',k'}^r$,  for $r=\min\{ L-l, L-l'\}$, we  say that they have {\it equivalent futures}, see Fig.~\ref{fig:inequiv}. Notice that the notion of equivalent futures depends always on the maximal length $L$ of the observed sequences, and that the comparison between subtrees makes sense only for equal lengths $r$ chosen as above. For a given deterministic sequence, we call $T_{1,1}$, i.e., the entire tree, the {\it history} of the sequence. 

This observation provides a way of counting the minimal number of states necessary for reproducing an extreme point of the AoT polytope, since inequivalent futures must correspond to different states. In particular, in order to check whether  $T_{l,k}^r \neq T_{l',k'}^r$, one can simply check for shorter sequences, i.e., whether $T_{l,k}^{s} \neq T_{l',k'}^{s}$ for $s=0,\ldots,r$.

\begin{figure}[t]\centering
\includegraphics[width=\columnwidth]{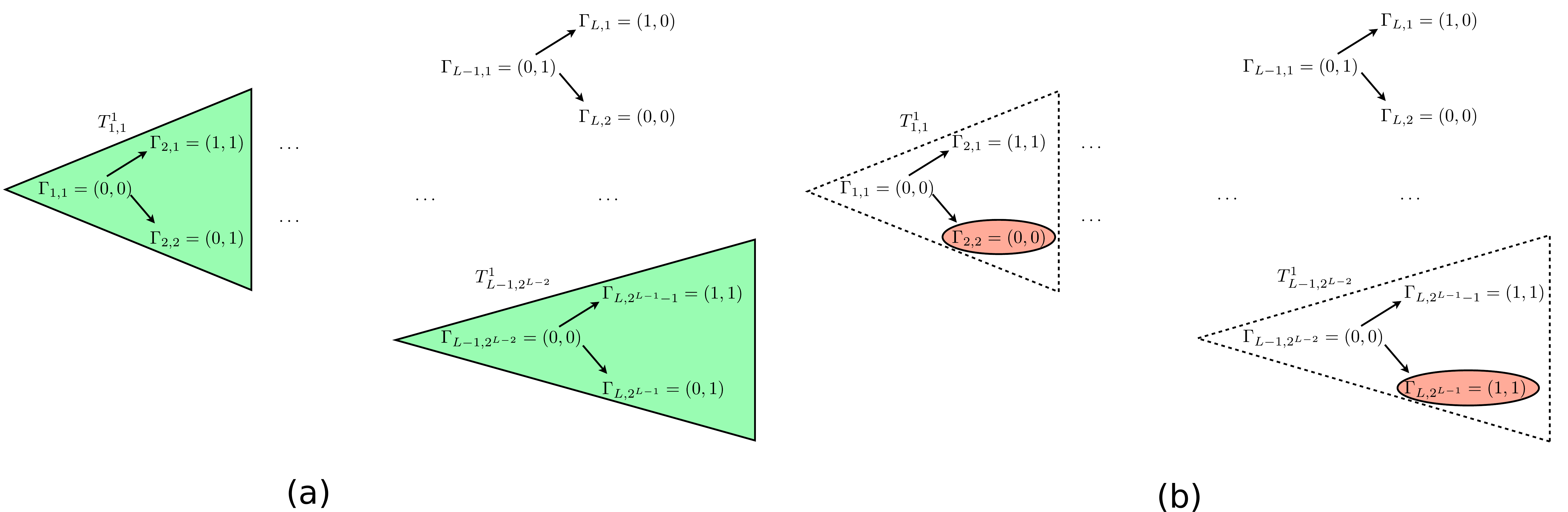}
\caption{We show two examples of, respectively, equivalent and inequivalent futures. In this case, since we want to compare the $(L-1)$-th node with another one, in this case the first, we need to consider only the $1$-length futures. In other words, in the comparison between tuples $\Gamma_{l,k}$ and  $\Gamma_{(L-1),k'}$, we need to consider $T_{l,k}^r = T_{L-1,k'}^r$,  with $r=\min\{ L-l, 1\}$. (a) An example of two equivalent futures. The $1$-length futures of $(1,1)$ and $(L-1,2^{L-2})$ are equivalent, since the nodes themselves, as well as the tuples occurring in the next time-step after the measurements leading to  them,  coincide.  (b) Example of two inequivalent futures. The $1$-length futures of $(1,1)$ and $(L-1,2^{L-2})$ are inequivalent as the respective subtrees do not coincide.\label{fig:inequiv} } 
\end{figure}

\section{Symmetries of the AoT polytope}\label{sec:symm}
It has been shown that, if no assumption on the dimension of the quantum system is made, any correlation in the polytope can be realized. The ability of realizing a correlation is independent of the chosen labeling of the outcomes and/or measurement settings as long as one performs exactly the same relabeling at every time step. This is due to the fact that any such relabeling can be implemented classically even after the measurement sequences have been performed, i.e. such relabelings correspond to some classical post-processing. Note that the condition that the same relabeling is applied to all time steps is necessary to be consistent with our assumption of time-independent measurements. In the following, we characterize the number of equivalence classes of extreme points under these symmetries for small numbers of settings and outcomes.

In particular, we  define an {\it outcome relabeling equivalence} (ORE) class 
as an equivalence class of extreme points  w.r.t. to the relation of being the same up to a relabeling of the outcomes. In particular, since relabeling is a classical post-processing, if one extreme point is obtainable by measurements on a physical system, the same is true for all elements in the class. Of course, also the measurement settings can be subject to relabeling and all extreme points that are equivalent up to relabeling of the measurement settings can be realized within the same physical implementation. Then, we define the  {\it relabeling equivalence} (RE) classes of extreme points as the set of extreme points that are equal up to relabeling of outcomes and measurement settings.

\subsection{General considerations}
In the temporal scenario the only relevant symmetries are given by the relabeling of inputs and outputs of a given sequence. The corresponding symmetry groups are given by the symmetric groups $\Sy_{O}$ and $\Sy_S$, where as defined before $O$ and $S$ are the number of outputs and inputs, respectively. The total group of symmetry is given by the direct product $\GG:=\Sy_O \times \Sy_S$.

To each element $g\in \GG$, we  associate a transformation $S_g$ on the extreme points of the polytope $\PP_L^{O,S}$. In this way, each extreme point $v$ of  $\PP_L^{O,S}$  generates an orbit $O_v$ defined as
\begin{equation}
O_v = \{  S_g v \ | \ g \in \GG\}
\end{equation}
The action of a group on a set naturally induces an equivalence relations in terms of orbits given by $v\sim w \Leftrightarrow O_v = O_w$. In this case, belonging to the same orbit means that the extreme point $v$ can be obtained from the extreme point $w$ via a relabeling of inputs and outputs, and vice versa.

The number of equivalence classes is then given by the number of different orbits. Hence, if one can evaluate the number of orbits one can deduce the number of RE classes for a given scenario. Below we show how to do so by identifying the elements that are invariant under a symmetry, their orbits and the cardinality of the orbits, which allows us to deduce the number of orbits, for the case of two outcomes and two and three settings. The same procedure can be applied to arbitrary number of outcomes and settings without extra conceptual difficulties;  however, as the symmetric group grows, recall that there are $n!$ permutations of $n$ elements, the whole procedure becomes much longer and tedious.

\subsection{Relabeling of outcomes and measurement settings}
For the case $O=2$ one can define for each ORE class  as a representative an extreme point having the property that for any measurement the outcome in the first time step is ``$0$''. This allows us to count the number of ORE classes as given in the following lemma.
\begin{lemma}\label{lemma:ORE}
The number of ORE classes of extreme points of $P_{L}^{2,S}$ is given by $N_{ORE}^{2,S,L}=(2^S)^{\frac{S^L-S}{S-1}}$.
\end{lemma}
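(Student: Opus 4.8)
The plan is to produce a canonical representative for each ORE class and count these representatives directly, as the statement suggests. First I would pin down the outcome-relabeling group precisely. Since the measurements are time-independent, relabeling the outcomes of a measurement $M_i$ is a classical post-processing that may be chosen independently for each $i$, but must be applied identically at every time step. For $O=2$ the only non-trivial relabeling of a single measurement is the bit-flip, so the relevant group is the set of per-measurement flips $b=(b_1,\dots,b_S)\in\{0,1\}^S$, of order $2^S$. Such a $b$ acts on an extreme point by replacing, simultaneously at every node $(l,k)$ of the tree, the tuple $\Gamma_{l,k}=(z_1,\dots,z_S)$ with $(z_1\oplus b_1,\dots,z_S\oplus b_S)$.

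Next I would show that every ORE class contains exactly one extreme point whose root tuple is $\Gamma_{1,1}=(0,\dots,0)$, i.e.\ whose first-time-step outcome equals ``$0$'' for every measurement. Existence is immediate: starting from an arbitrary extreme point with root tuple $(z_1,\dots,z_S)$, the flip $b_i=z_i$ sends the root to $(0,\dots,0)$. Uniqueness follows because two zero-root points in the same orbit must differ by some $b$, and the zero-root condition forces $0\oplus b_i=0$ for all $i$, hence $b=0$ and the points coincide. Equivalently, this says the action is free, so that all orbits have the same cardinality $2^S$; packaging it as the uniqueness of the zero-root representative is the most economical route.

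Given this bijection between ORE classes and zero-root extreme points, the count is elementary. By the tree description an extreme point is an unconstrained assignment of a tuple in $\{0,1\}^S$ to each of the $\tfrac{S^L-1}{S-1}$ nodes. Fixing the single root tuple to $(0,\dots,0)$ leaves $\tfrac{S^L-1}{S-1}-1=\tfrac{S^L-S}{S-1}$ nodes free, each carrying $2^S$ choices, which yields $(2^S)^{\frac{S^L-S}{S-1}}$ representatives and hence $N_{ORE}^{2,S,L}=(2^S)^{\frac{S^L-S}{S-1}}$. As a consistency check, one may instead divide the total number of extreme points $N_{L}^{2,S}=(2^S)^{\frac{S^L-1}{S-1}}$ by the common orbit size $2^S$.

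The main obstacle I expect is conceptual rather than computational: correctly recognizing that outcome relabeling acts independently on each of the $S$ measurements, so that the group has order $2^S$ rather than $2$, and verifying that this action is free. Once the per-measurement structure and freeness are established, the combinatorial count follows without further effort.
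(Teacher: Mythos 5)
Your proposal is correct and follows essentially the same route as the paper: fix the all-zeros first-time-step tuple as the canonical representative of each ORE class, observe that the $2^S$ per-measurement flips act freely (so each orbit has exactly $2^S$ elements and the zero-root representative is unique), and then count. The only cosmetic difference is that you count the zero-root representatives directly while the paper divides the total number of extreme points $(2^S)^{\frac{S^L-1}{S-1}}$ by the orbit size $2^S$ — two computations you correctly note are equivalent.
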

\begin{proof}
We consider one representative of each ORE class and show that the cardinality of the ORE class is $2^S$. The number of equivalence classes is the total number of extreme points divided by the cardinality of such classes. It is straightforward to see that one particular choice of a representative is given by demanding the outcome of each measurement setting at the first time step to be ``$0$''. Note that, obviously, any relabeling of the outcomes must alter the outcome at the first time step. Note further that all subsequent outcomes specify the ORE class and any sequence of these outcomes is possible.

For each measurement setting there are $2$ possible outcomes in the first time step and therefore there are $2^S$ possible relabelings of the outcomes at the first time step. This implies that the cardinality of a ORE class is given by $2^S$.
Using that the number of extreme point of $P_{L}^{2,S}$ is given by $(2^S)^{\frac{S^L-1}{S-1}}$ (see Ref.~\cite{Hoffmann2018}), the number of ORE classes is then
\begin{equation}
\frac{(2^S)^{\frac{S^L-1}{S-1}}}{2^S}=(2^S)^{\frac{S^L-S}{S-1}},
\end{equation}
which concludes the proof.
\end{proof}

For the relabeling of the measurement settings, we  first present a counting argument for $O=S=2$. Then, we  extend our investigation to the case of $O=2$ and $S=3$.

\begin{lemma}
The number of RE classes of extreme points of $P_{L}^{2,2}$ is given by $\frac{1}{2}( 4^{2^L-2} + 4^{(2^{L-1}-1)})$.
\end{lemma}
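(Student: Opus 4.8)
The plan is to count the relabeling-equivalence classes as the orbits of the extreme points of $\PP_L^{2,2}$ under the full relabeling group and to evaluate this number with the orbit-counting theorem (Burnside's lemma). First I would fix the group precisely. As in the proof of Lemma~\ref{lemma:ORE}, outcome relabelings act \emph{independently} on the outcomes attached to each measurement setting (this is what makes each ORE class have cardinality $2^S$), and in addition the two settings may be swapped. Hence for $O=S=2$ the group $W$ is generated by the two per-setting outcome flips $\sigma_1,\sigma_2$ and the setting swap $\tau$, which satisfy $\tau\sigma_1\tau=\sigma_2$. Thus $W\cong\Sy_2\wr\Sy_2$ is the wreath product of order $2^2\cdot 2!=8$ (isomorphic to the dihedral group $D_4$), acting on the set of $4^{2^L-1}$ extreme points, each of which is a complete binary tree of $2^L-1$ nodes carrying a tuple in $\{0,1\}^2$ at each node.

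Next I would record the action on the tree and count, for each $g\in W$, the number $F_g$ of fixed extreme points. An outcome flip $\sigma_i$ complements the $i$-th entry of every tuple while leaving the tree shape intact, whereas $\tau$ simultaneously swaps the two entries of every tuple and interchanges, recursively, the two maximal subtrees hanging from the root. The key simplification is that most elements fix nothing: any element that flips at least one setting's outcomes but does not swap settings, namely $\sigma_1$, $\sigma_2$, and $\sigma:=\sigma_1\sigma_2$, alters every tuple and so has $F_g=0$; and the two order-$4$ elements $\tau\sigma_1,\tau\sigma_2$ square to the fixed-point-free total flip $\sigma$, whence they too fix nothing. Only $e$, $\tau$, and $\tau\sigma$ survive.

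For the surviving elements I would run the recursive subtree argument. The identity gives $F_e=4^{2^L-1}$. For $\tau$, a fixed tree needs a root tuple invariant under entry-swap, i.e.\ $(z,z)$, giving $2$ choices, while the setting-$2$ subtree is forced to equal the $\tau$-image of the setting-$1$ subtree; the latter is an unconstrained depth-$(L-1)$ tree with $2^{L-1}-1$ nodes, yielding $4^{2^{L-1}-1}$ choices, so $F_\tau=2\cdot 4^{2^{L-1}-1}$. The same count holds for $\tau\sigma$ (the root condition becomes $(z,\bar z)$, again $2$ choices, and one subtree determines the other), consistent with $\tau$ and $\tau\sigma$ being conjugate in $D_4$. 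Burnside's lemma then gives
\begin{equation}
N_{RE}=\frac{1}{8}\Big(4^{2^L-1}+2\cdot 2\cdot 4^{2^{L-1}-1}\Big)=\frac{1}{2}\big(4^{2^L-2}+4^{2^{L-1}-1}\big),
\end{equation}
using $4^{2^L-1}=4\cdot 4^{2^L-2}$ and $4^{2^{L-1}}=4\cdot 4^{2^{L-1}-1}$. As a consistency check this equals $\tfrac12\big(N_{ORE}^{2,2,L}+f\big)$ with $N_{ORE}^{2,2,L}=4^{2^L-2}$ from Lemma~\ref{lemma:ORE} and $f=4^{2^{L-1}-1}$ the number of ORE classes left invariant by the setting swap.

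The main obstacle I anticipate is pinning down the symmetry group and justifying the fixed-point counts rather than any heavy computation. One must argue that outcome relabelings act per setting, so that $|W|=8$ rather than $4$; getting this wrong rescales the final answer by a factor of two. The second delicate point is the recursive bookkeeping for $F_\tau$ — verifying that fixing one subtree leaves its conjugate subtree completely determined and imposes no further internal constraints — together with the observation that the order-$4$ elements inherit fixed-point-freeness from their square $\sigma$, which is what collapses the Burnside sum to just three nonzero terms.
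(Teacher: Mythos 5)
Your proof is correct, and it reaches the paper's formula by a genuinely different packaging of the orbit count. The paper proceeds in two stages: it first quotients by the outcome relabelings alone, using that this action is free (each ORE class has exactly $2^S=4$ elements, Lemma~\ref{lemma:ORE}), and then decomposes the induced action of the two-element swap group $\{e,S_{12}\}$ on the ORE representatives into invariant and non-invariant orbits, obtaining $N^{(L)}=\tfrac{1}{2}(N_{\rm ORE}^{(L)}+N_{\rm I}^{(L)})$ with $N_{\rm I}^{(L)}=4^{2^{L-1}-1}$ counted level by level --- exactly your recursive count of swap-invariant trees. You instead apply Burnside's lemma once, to the full relabeling group $\Sy_2\wr\Sy_2\cong D_4$ of order $8$ acting on all $4^{2^L-1}$ extreme points, which obliges you to check fixed points of all eight elements (the pure flips fix nothing, the order-four elements inherit fixed-point-freeness from their square, and only $e$, $\tau$, $\tau\sigma$ contribute); the paper's staged quotient avoids this bookkeeping because after the free quotient only a $\mathbb{Z}_2$ action remains. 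The two computations are consistent: your $F_\tau+F_{\tau\sigma}=4\cdot 4^{2^{L-1}-1}$ is exactly $2^S$ times the paper's $N_{\rm I}^{(L)}$, as it must be since every fixed point of $\tau$ or $\tau\sigma$ lies in a swap-invariant ORE class of size four. What your route buys is an explicit and correct identification of the symmetry group: the paper's general discussion writes $\GG=\Sy_O\times\Sy_S$, but its own counting (ORE classes of cardinality $2^S$) implicitly uses per-setting outcome relabelings, i.e.\ the wreath product you describe, and your remark that the direct product would rescale the answer by a factor of two pinpoints this imprecision. What the paper's route buys is economy and scalability: the same two-stage scheme is reused in its appendix for $S=3$, where a single Burnside sum over $\Sy_2\wr\Sy_3$ (order $48$) would be considerably more laborious than handling the six elements of $\Sy_3$ acting on ORE classes.
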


\begin{proof}
 \begin{figure}[b]
\includegraphics[width=0.6\columnwidth]{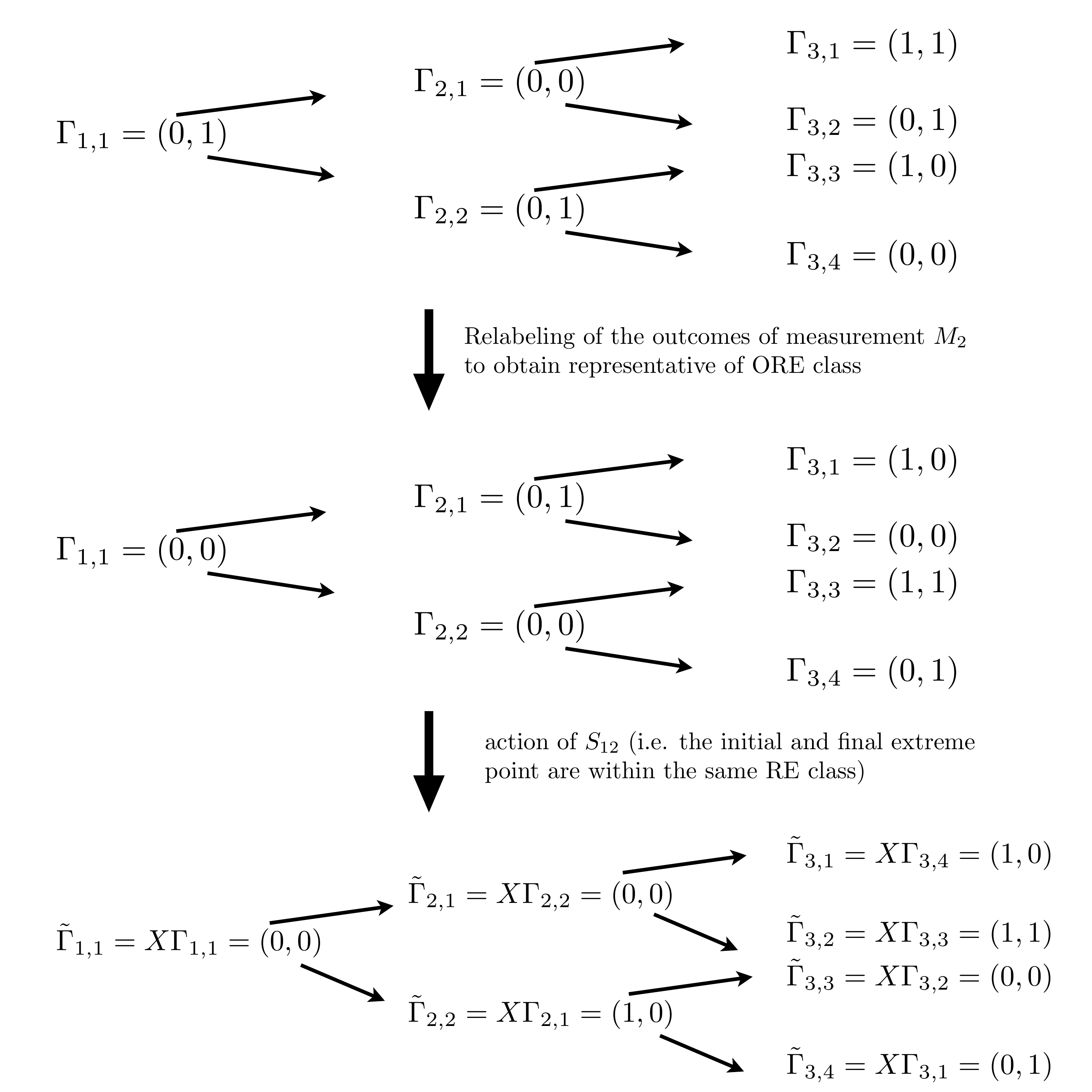} 
\caption{Illustration of the action of the group on an example.  First we transform the extreme point to the representative of the ORE class, which has the property that in the first time step all outcomes are „0“ by relabeling the outcomes accordingly. Then we illustrate the action of the symmetry $S_{12}$ (which permutes the measurement settings) on this representative, where the tuples obtained by applying $S_{12}$ are denoted by $\tilde{\Gamma}_{k,l}$ and the operator $X$ permutes the the elements of the tuples $\Gamma_{k’,l’}$ of the extreme point on which the transformation is performed. The other group element of $\Sy_2$, i.e., the identity element $e$, leaves any extreme point invariant.
\label{fig:symmetrygroup}}
\end{figure} 

As we know already from Lemma~\ref{lemma:ORE} the number of equivalence classes under the relabeling of outcomes, it is sufficient to study the action of the permutation of the inputs on the representatives of the ORE classes. Since $S=2$, we have the group $\Sy_2=\{e, S_{12}\}$, where $e$ is the identity element and $S_{12}$ exchange the first and second input (see also Fig. \ref{fig:symmetrygroup}). In particular $S_{12}^2=e$.
For a given length $L$, the number of equivalence classes $N^{(L)}$ is given  by 
\begin{equation}
N^{(L)} = N_{\rm I}^{(L)} + N_{\rm N}^{(L)},
\end{equation}
where $N_{\rm I}^{(L)}$ is the number of orbits consisting of only one vector, i.e., the vectors invariant under the action of the group (see Fig. \ref{fig6}), and $N_{\rm N}^{(L)}$ the number of orbits consisting of two vectors, i.e., vectors not invariant under the action of the group. To count the number of invariant vectors, we apply the following argument. First, let us fix the outcome relabeling by choosing the first input as ``$0$''. Given that the sequence is invariant under exchange of inputs until step $m-1$, the possible ways of completing it in an invariant way at the step $m$ are $4^{2^{m-2}}$ out of $4^{2^{m-1}}$ possible completions. In fact, for each fixed choice of inputs and outputs at the step $m-1$ there are $4$ possible ways of completing the last two outcomes for the two settings. Out of the $2^{m-1}$ possible choices of inputs, we need to fix only half of them, because the other half is fixed by the input exchange symmetry $S_{12}$, see Fig.~\ref{fig6}. 

\begin{figure}[t]
\begin{center}
\includegraphics[width=0.7\columnwidth]{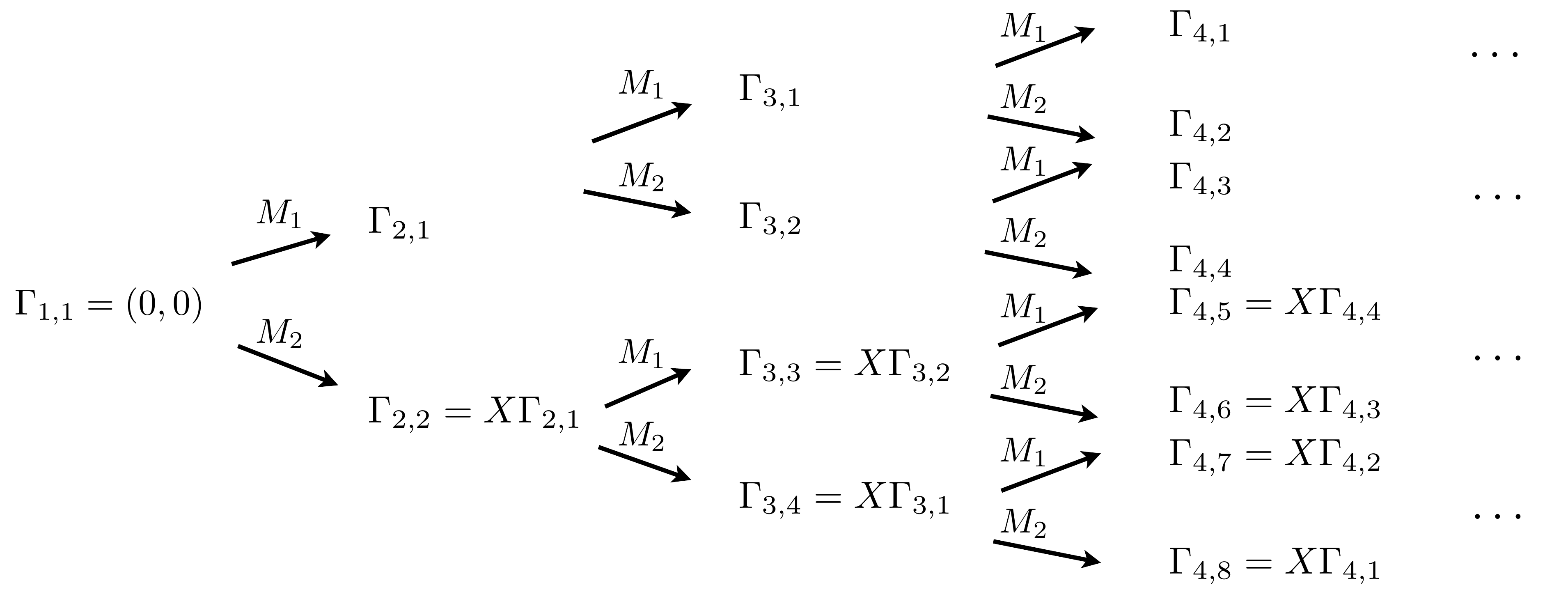} 
\end{center}
\caption{ Equivalent tuples for an extreme point invariant under settings relabeling, in the $O=S=2$ scenario. The operator $X$ permutes the the  elements of the tuple. E.g., The tuple $\Gamma_{2,2}$ corresponds to the outcomes for $M_1$ and $M_2$ in the second step, after $M_2$ has been measured. For the extreme point to be invariant under relabeling of outcomes, it must be that $\Gamma_{2,2}=X \Gamma_{2,1}$, i.e., the outcomes are, up to relabeling, those that would have been obtained had we measured $M_1$ in the first step.}
\label{fig6}
\end{figure} 

The number of possibilities to extend a vector at the step $m$, given that it is symmetric at the step $m-1$, is thus $4^{2^{m-2}}$. We can then compute the number of invariant vectors up to length $L$ as
\begin{equation}
N_{\rm I}^{(L)}:= \prod_{i=2}^L 4^{2^{m-2}} =4^{2^{L-1}-1}.
\end{equation}

Moreover, we can compute the number of equivalence classes of non-invariant vectors as 
\begin{equation}
N_{\rm N}^{(L)} =  \frac{1}{2}(N_{\rm ORE}^{(L)} - N_{\rm I}^{(L)}),
\end{equation}
where $(N_{\rm ORE}^{(L)} - N_{\rm I}^{(L)})$ is the number of non-invariant vectors, and the factor $1/2$ comes from the fact that each orbit contains two elements.

Finally, we can write
\begin{equation}
N^{(L)} = N_{\rm I}^{(L)} + N_{\rm N}^{(L)} =  \frac{1}{2}(N_{\rm ORE}^{(L)} + N_{\rm I}^{(L)})=\frac{1}{2}( 4^{2^L-2} + 4^{2^{L-1}-1}).
\end{equation} 
\end{proof}

For the most simple scenario, $O=S=L=2$, this implies that there are ten RE classes. As already discussed in \cite{Hoffmann2018}, six of these classes can be obtained with a qubit, whereas for four of these classes a qutrit is required. In Table \ref{table1}, we provide a representative for each of these classes and indicate whether a qubit or a qutrit is necessary in order to realize a member of this class. In the following section, we  then present a general theorem which allows us to deduce from a given extreme point (with arbitrary $O, S$ and $L$) the dimension that is necessary and sufficient to realize it.

\begin{table}
  \begin{tabular}{ | l | c | r |} \hline
      Extreme point [with $\Gamma_0=(0,0)$] & minimal dimension \\ \hline
       $\Gamma_1=(0,0), \Gamma_2=(0,0)$ &  1 \\ \hline
   $\Gamma_1=(0,0), \Gamma_2=(1,1)$ &   2\\ \hline
  $ \Gamma_1=(0,0), \Gamma_2=(0,1)$ &   2\\ \hline
   $\Gamma_1=(0,0), \Gamma_2=(1,0)$ &   2\\ \hline
  $ \Gamma_1=(0,1), \Gamma_2=(0,1) $&   2\\ \hline
   $\Gamma_1=(1,1), \Gamma_2=(1,1) $&   2\\ \hline
   $\Gamma_1=(0,1),\Gamma_2=(1,0) $&   3\\ \hline
       $ \Gamma_1=(1,0),\Gamma_2=(0,1) $&   3\\ \hline
 $\Gamma_1=(0,1),\Gamma_2=(1,1)$ &   3\\ \hline
$\Gamma_1=(1,1),\Gamma_2=(0,1)$ &   3\\ \hline
\end{tabular}
\caption{This table shows a representative for each of the 10 RE classes for $O=S=L=2$ and the minimal dimension which allows to reach a member of the class (see also Theorem \ref{Theoreml2}). Note that  for $O=S=L=2$ the RE classes and their minimal dimension has been already identified in \cite{HoffmannThesis2016} and a corresponding table can be also found there (with a different choice of representatives).}
\label{table1}
\end{table}

After having gained an understanding of the case of two inputs, we generalize our approach for counting the RE classes to the case of three inputs. 

\begin{lemma}\label{LemmanumRE23}
The number of RE classes of extreme points of $P_{L}^{2,3}$ is given by $2^{{\frac{L-1}{2}+\frac{3(3^L-3)}{4}}-1} + \frac{1}{6}[ 2^{3\frac{3^L-3}{2}} + 2^{\frac{3^L-3}{2}+1} ].$
\end{lemma}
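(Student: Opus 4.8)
The plan is to count the RE classes by the orbit-counting (Burnside) theorem applied to the action of the settings group $\Sy_3$ on the set of ORE classes, paralleling the $\Sy_2$ argument of the previous lemma but now with a six-element group. By Lemma~\ref{lemma:ORE} the number of ORE classes of $P_L^{2,3}$ is $N_{\rm ORE}=(2^3)^{(3^L-3)/2}=2^{3(3^L-3)/2}$, and each ORE class has a unique canonical representative, namely the one whose first-step outcomes are all ``$0$''. Since a pure relabeling $g\in\Sy_3$ of the inputs leaves the root tuple $(0,0,0)$ invariant (permuting its entries does nothing), it maps canonical representatives to canonical representatives; hence an ORE class is $g$-invariant if and only if its canonical representative is fixed by $g$. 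The number of RE classes is therefore the number of $\Sy_3$-orbits of canonical representatives, and Burnside gives
\begin{equation}
N^{(L)}=\frac{1}{6}\sum_{g\in\Sy_3}F_g=\frac{1}{6}\left(F_e+3F_t+2F_c\right),
\end{equation}
where $F_g$ is the number of canonical representatives fixed by $g$ and I have grouped the six elements into the identity, the three transpositions (contributing $F_t$ each), and the two $3$-cycles (contributing $F_c$ each), since $F_g$ depends only on the cycle type.

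Next I would translate ``fixed by $g$'' into a product of local counts over the tree. A canonical representative assigns to each node (input history) a binary tuple $\Gamma=(z_1,z_2,z_3)$, with the root forced to $(0,0,0)$. As in Fig.~\ref{fig6}, the permutation $g$ acts simultaneously by permuting nodes (acting entrywise on the input histories) and by permuting the entries of every tuple via some $P_g$; invariance means $\Gamma_{g\cdot n}=P_g\Gamma_n$ for every node $n$. Consequently, within each $\langle g\rangle$-orbit of nodes of length $d$ the tuple at one chosen node determines all the others and must satisfy the consistency condition $P_g^{\,d}\Gamma=\Gamma$. Since a tuple is fixed by a permutation exactly when it is constant on each of its cycles, the number of admissible tuples on an orbit of length $d$ is $2^{c(g^d)}$, where $c(g^d)$ is the number of cycles of the induced entry permutation on $\{1,2,3\}$. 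Excluding the forced root, this yields $F_g=\prod_{\mathcal O}2^{c(g^{|\mathcal O|})}$, the product running over all node-orbits at levels $l\ge 2$.

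The remaining work is the orbit bookkeeping at each level. For the identity every node is fixed, giving $F_e=8^{(3^L-3)/2}=N_{\rm ORE}$. For a transposition, say $(12)$, a history at level $l$ is fixed iff all its entries equal the fixed point ``$3$'': there is exactly one such node, contributing a factor $2^{2}=4$ (the entry permutation $(12)$ has the two cycles $\{1,2\}$ and $\{3\}$), while the remaining $(3^{l-1}-1)/2$ orbits have size two and each contribute $2^{3}=8$. Multiplying over $l=2,\dots,L$ and collapsing the geometric sums gives $F_t=4^{L-1}\,8^{(3^L-3)/4-(L-1)/2}=2^{(L-1)/2+3(3^L-3)/4}$. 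For a $3$-cycle there are no fixed points, so for $l\ge 2$ every orbit has size three; there are $3^{l-2}$ of them at level $l$, each contributing $2^{3}=8$, whence $F_c=8^{(3^{L-1}-1)/2}=2^{(3^L-3)/2}$. Substituting $F_e,F_t,F_c$ into the Burnside sum and pulling the transposition term (with prefactor $3/6=1/2$) out front reproduces the stated expression $2^{(L-1)/2+3(3^L-3)/4-1}+\frac{1}{6}[2^{3(3^L-3)/2}+2^{(3^L-3)/2+1}]$.

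The main obstacle is the orbit analysis of the third paragraph, and in particular the single $g$-fixed node (the all-``$3$'' history) that appears at every level for a transposition: it carries the reduced factor $4$ rather than $8$ and is the sole origin of the half-integer exponent $(L-1)/2$, so the delicate part is to count these fixed nodes and the accompanying size-$d$ orbits correctly at each level, evaluate the geometric sums, and remember to exclude the forced root. A secondary point to verify carefully is the reduction used at the start, namely that invariance of an ORE class under $g$ is equivalent to \emph{exact} invariance of its canonical representative, which relies on the uniqueness of that representative for $O=2$.
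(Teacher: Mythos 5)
Your proposal is correct: the reduction to canonical ORE representatives is sound, the three fixed-point counts are right, and the Burnside average reproduces the stated formula. The route differs from the paper's only in the orbit bookkeeping, but the difference is worth noting. The paper does not invoke Burnside's lemma; it instead partitions orbits by stabilizer type, writing $N^{(L)} = N_{\rm I}^{(L)} + N_{S}^{(L)} + N_{\sigma}^{(L)} + N_{\rm N}^{(L)}$, which forces it to (i) compute the number $N_{\rm I}^{(L)}=2^{(L-1)/2+(3^L-3)/4}$ of fully invariant vectors separately, (ii) prove structural facts about orbits — e.g.\ that the orbit of a vector invariant under exactly one transposition consists of three vectors, one invariant under each $S_{ij}$, and that an orbit of vectors invariant under the $3$-cycles but not fully invariant contains exactly two such vectors — and (iii) assemble the answer by inclusion--exclusion, in which the $N_{\rm I}^{(L)}$ contributions cancel at the very end. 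Your formulation makes that cancellation automatic and renders (i) and (ii) unnecessary: the only inputs are the fixed-vector counts per cycle type, and these you compute by exactly the same level-by-level tree analysis as the paper, with $F_e=N_{\rm ORE}^{(L)}$, $F_t=\td{N}_S^{(L)}=2^{(L-1)/2+3(3^L-3)/4}$ (the all-$M_3$ history giving the reduced factor $4$ per level, just as in the paper's Fig.~\ref{fig8}), and $F_c=\td{N}_\sigma^{(L)}=2^{(3^L-3)/2}$ (as in Fig.~\ref{fig9}). Indeed the paper's final expression $N^{(L)}=\tfrac{1}{6}N_{\rm ORE}^{(L)}+\tfrac{1}{2}\td{N}_{S}^{(L)}+\tfrac{1}{3}\td{N}_\sigma^{(L)}$ in Eq.~\eqref{eq:num_classS3} \emph{is} your Burnside sum. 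What your approach buys is economy and extensibility — it generalizes mechanically to any $\Sy_S$ once fixed points are counted per cycle type — while the paper's decomposition yields finer information as a byproduct, namely how many classes have each symmetry type, not just their total number. Your two flagged caveats (excluding the forced root from every product, and the equivalence between $g$-invariance of an ORE class and exact invariance of its unique canonical representative, which holds because settings permutations preserve the all-zero root tuple) are exactly the right points to check, and you resolve both correctly.
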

In order to prove the Lemma, we  start again from the equivalence classes of outcome relabeling and impose only the conditions for the relabeling of the inputs. In this case, we have the permutation group of three elements, $\Sy_3$, representing the permutation of the inputs. The group $\Sy_3$ consists of the following elements
\begin{equation}
\begin{split}
&e\\
&S_{12}\\
&S_{23}\\
&S_{13}=S_{23} S_{12} S_{23} = S_{12} S_{23} S_{12}\\
&\sigma_{123} = S_{12} S_{23} = S_{23} S_{13}\\
&\sigma_{132} = S_{13} S_{23} = S_{23} S_{12}
\end{split}
\end{equation}
We therefore write the total number of equivalence classes as
\begin{equation}
N^{(L)} = N_{\rm I}^{(L)} + N_{S}^{(L)} + N_{\sigma}^{(L)} + N_{\rm N}^{(L)},
\end{equation}
where $N_{\rm I}^{(L)} , N_{\rm N}^{(L)} $ are defined as above, as the orbits of vectors that are invariant or non-invariant under any symmetry respectively, and $N_{S}^{(L)} $ ($N_{\sigma}^{(L)} $) are the orbits of vectors invariant under only one of the $S_{ij}$ (vectors invariant only under  $\sigma_{123}$ or $\sigma_{132}$) respectively. Note here that $S_{ij}^2=e$ and that $\sigma_{123} \sigma_{132} = \sigma_{132} \sigma_{123}=e$. In Appendix \ref{AppnumRE23} we count the number of invariant and non-invariant orbits and with this prove Lemma \ref{LemmanumRE23}.

\section{Minimal dimension for given extreme points}\label{sec:mindim}
It is a well known result in quantum state discrimination that two states have orthogonal ranges, corresponding to a trace-distance of $1$, if and only if they can be perfectly discriminated, i.e., with probability $1$, by a single measurement (cf., e.g., Ref.~\cite{NC_book} Ch. 9). More precisely, this fact can be stated as follows
\begin{observation}\label{obs:orth_range}
Let $E$ be an effect of a POVM and $\rho_1=\sum_{i\in I} p_i\kb{\Psi_i}{\Psi_i}$ with $p_i>0$ ($\rho_2=\sum_{k\in K} q_k\kb{\Phi_k}{\Phi_k}$ with $q_k>0$) the spectral decomposition of a density matrix $\rho_1$ ($\rho_2$) respectively. Then $\tr \{\rho_1 E\}=1$ and $\tr \{\rho_2 E\}=0$ only if $\braket{\Psi_i}{\Phi_k}=0$ for all $ i\in I$ and $k\in K$. 
\end{observation}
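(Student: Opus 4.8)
The plan is to exploit the two defining inequalities of an effect, $0\leq E\leq \openone$, treating the two hypotheses $\tr\{\rho_2 E\}=0$ and $\tr\{\rho_1 E\}=1$ separately and then combining them. The single elementary fact I would use repeatedly is that for a positive semidefinite operator $A\geq 0$ and a vector $\ket{\psi}$, the condition $\bra{\psi}A\ket{\psi}=0$ forces $A\ket{\psi}=0$. This follows by writing $A=\sqrt{A}\,\sqrt{A}$ with $\sqrt{A}$ the positive square root, so that $\bra{\psi}A\ket{\psi}=\|\sqrt{A}\ket{\psi}\|^2$; vanishing of the latter gives $\sqrt{A}\ket{\psi}=0$ and hence $A\ket{\psi}=0$.

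First I would treat the condition on $\rho_2$. Inserting the spectral decomposition gives $\tr\{\rho_2 E\}=\sum_{k\in K} q_k \bra{\Phi_k} E \ket{\Phi_k}=0$. Since $E\geq 0$ every summand is nonnegative, and since each $q_k>0$, every term must vanish, i.e. $\bra{\Phi_k} E \ket{\Phi_k}=0$. Applying the elementary fact with $A=E$ yields $E\ket{\Phi_k}=0$ for all $k\in K$, so every eigenvector of $\rho_2$ with nonzero eigenvalue lies in the kernel of $E$.

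Next I would treat the condition on $\rho_1$ by passing to the complementary effect $\openone-E$, which is again positive semidefinite precisely because $E\leq\openone$. From $\tr\{\rho_1 E\}=1$ together with $\tr\{\rho_1\}=1$ one obtains $\tr\{\rho_1(\openone-E)\}=0$, and the same reasoning applied to $A=\openone-E$ and the spectral vectors $\ket{\Psi_i}$ (with $p_i>0$) gives $(\openone-E)\ket{\Psi_i}=0$, that is, $E\ket{\Psi_i}=\ket{\Psi_i}$ for all $i\in I$.

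Finally I would combine the two conclusions. Using Hermiticity of $E$ together with $E\ket{\Psi_i}=\ket{\Psi_i}$ and $E\ket{\Phi_k}=0$, I compute $\braket{\Psi_i}{\Phi_k}=\bra{\Psi_i}E\ket{\Phi_k}=0$, which is exactly the claim. I do not expect a genuine obstacle, as this is the standard perfect-discrimination statement; the only point requiring real care is the inference $\bra{\psi}A\ket{\psi}=0\Rightarrow A\ket{\psi}=0$, which must be justified through positivity via the square-root argument rather than merely asserted, and the recognition that $E\leq\openone$ is what makes $\openone-E$ a legitimate positive operator, so that the identical reasoning applies symmetrically to both hypotheses.
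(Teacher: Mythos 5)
Your proposal is correct, but it follows a genuinely different route from the paper. The paper first establishes a stronger structural lemma (Lemma~\ref{lemprob1} in Appendix~\ref{AppendObs}): writing $E$ in an ONB that extends the eigenvectors of $\rho_1$, it shows that $\tr\{\rho_1 E\}=1$ holds \emph{iff} $E=\sum_{i\in I}\kb{\Psi_i}{\Psi_i}+E_K$ with $E_K$ supported on the orthogonal complement; the key step there is that $c_{ii}=1$ together with $\bra{\Psi_i}E^\dagger E\ket{\Psi_i}\leq 1$ kills all off-diagonal matrix elements attached to the support of $\rho_1$. The Observation is then deduced by inserting this form into $\tr\{\rho_2 E\}=0$ and using positivity of $\rho_2$ and $E_K$ to force $\bra{\Psi_i}\rho_2\ket{\Psi_i}=0$. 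You instead bypass any basis decomposition and apply one elementary fact twice and symmetrically: for $A\geq 0$, $\bra{\psi}A\ket{\psi}=0$ implies $A\ket{\psi}=0$, applied once to $A=E$ (giving $E\ket{\Phi_k}=0$) and once to $A=\openone-E$ (giving $E\ket{\Psi_i}=\ket{\Psi_i}$), after which Hermiticity of $E$ yields $\braket{\Psi_i}{\Phi_k}=\bra{\Psi_i}E\ket{\Phi_k}=0$. Both arguments are sound and rest on the same hypotheses ($0\leq E\leq\openone$, strict positivity of the eigenvalues, $\tr\rho_1=1$). What the paper's approach buys is the standalone iff-characterization of effects attaining probability one on a state, which is stated as a lemma of independent interest; what your approach buys is brevity, symmetry between the two hypotheses, and a basis-free derivation of the operator identities $E\ket{\Psi_i}=\ket{\Psi_i}$, $E\ket{\Phi_k}=0$, which encode the same structural content without the explicit matrix-element bookkeeping.
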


It is important to notice that via a single POVM $E$ one can represent not only a single measurement, but also a sequence, e.g., ${E_{abc|xyz}:=\II_{a|x}\circ \II_{b|y} \circ \II_{c|z} (\openone)}$, where the maps $\{\II_{a|x}\}_a$ represent the quantum instrument in the Heisenberg picture. This implies that not only states that produce a different outcome with probability one are orthogonal, but also states that produce a different sequence of outcomes with probability one are orthogonal.

Using this, we are able to determine the minimum dimension that is required for a quantum system to obtain a given extreme point of $P_2^{O,S}$.

\begin{theorem}\label{Theoreml2}
Given an extreme point $p$ of $P_2^{O,S}$, the minimal dimension $d$ needed to obtain it is given by the number of inequivalent tuples in the history of $p$, i.e., $T_{1,1} =\{\Gamma_{1,1},\Gamma_{2,1},\ldots ,\Gamma_{2,S}\}$. In particular, a system with dimension $d=S+1$ can always reach all extreme points of $P_2^{O,S}$, independently of the number of outcomes, as this is the maximal number of tuples in $T_{1,1}$.
\end{theorem}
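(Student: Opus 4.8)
The plan is to prove both bounds: that any quantum realization of $p$ requires at least $d$ dimensions, where $d$ is the number of inequivalent tuples in $T_{1,1}=\{\Gamma_{1,1},\Gamma_{2,1},\ldots,\Gamma_{2,S}\}$, and that a construction achieving exactly dimension $d$ exists. First I would record the simplification special to $L=2$: the tree has only the root $(1,1)$ and the $S$ leaves $(2,1),\ldots,(2,S)$, and for any pair of these nodes the comparison length is $r=\min\{L-l,L-l'\}=0$. Hence two nodes have equivalent futures if and only if their tuples are literally equal, so the ``number of inequivalent tuples'' is just the number of distinct vectors among $\Gamma_{1,1},\ldots,\Gamma_{2,S}$, of which there are at most $S+1$.

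For the lower bound, I would use that in any realization the state $\rho_{l,k}$ attached to a node deterministically produces its tuple $\Gamma_{l,k}$, i.e.\ measuring $M_i$ on $\rho_{l,k}$ yields the outcome $z_i^{(l,k)}$ (the $i$-th entry of $\Gamma_{l,k}$) with probability one for every $i$. If two nodes carry distinct tuples, there is a setting $M_i$ with $z_i^{(l,k)}\neq z_i^{(l',k')}$. Since the measurements are time-independent, the \emph{same} POVM $\{E_{a|i}\}_a$ is applied at both nodes, and the effect $E:=E_{z_i^{(l,k)}|i}$ satisfies $\tr[\rho_{l,k}\,E]=1$ and $\tr[\rho_{l',k'}\,E]=0$. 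By Observation~\ref{obs:orth_range} the two states then have orthogonal supports. Consequently the states associated with the $d$ distinct tuples have pairwise orthogonal nonzero supports, which forces the Hilbert-space dimension to be at least $d$.

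For the matching upper bound I would exhibit an explicit finite-state machine: introduce one orthonormal basis vector $\ket{j}$ per distinct tuple $\Gamma^{(j)}$, set the initial state $\rho=\kb{j}{j}$ with $\Gamma^{(j)}=\Gamma_{1,1}$, and define, for each setting $M_i$, a transition $f(j,i)$ pointing to the vector of the child tuple (for the root, $f$ sends it to the vector of $\Gamma_{2,i}$) together with the deterministic outcome $z_i^{(j)}$. I would realize the corresponding instrument by Kraus operators $K_{a,i,j}:=\kb{f(j,i)}{j}$, one for each source $j$ with $z_i^{(j)}=a$; then $\sum_a\sum_{j:\,z_i^{(j)}=a}K_{a,i,j}^\dagger K_{a,i,j}=\sum_j\kb{j}{j}=\openone$, so the instrument is legitimate, and a direct check shows the statistics equal $p$. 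Since $|T_{1,1}|=S+1$ regardless of $O$, we get $d\leq S+1$ with equality attainable, which is the ``in particular'' claim. I expect the main obstacle to be the lower bound: one must argue carefully that every node's state deterministically reproduces its tuple and that the distinguishing effect acts identically at the two (generally different) time steps, so that Observation~\ref{obs:orth_range} genuinely applies; the construction is then mostly a bookkeeping verification that the instruments are valid and reproduce $p$.
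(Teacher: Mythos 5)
Your proposal is correct and follows essentially the same route as the paper's proof: the lower bound comes from applying Observation~\ref{obs:orth_range} to the node states (distinct tuples force a perfectly distinguishing effect, hence orthogonal supports), and the upper bound comes from an explicit diagonal, measure-and-prepare construction with one basis vector per distinct tuple, which is exactly the paper's construction in Eq.~\eqref{eq:povm_const_gamma} up to bookkeeping. Your write-up is in fact slightly more explicit than the paper on two minor points --- that for $L=2$ the comparison length is $r=0$ so equivalence of futures reduces to literal equality of tuples, and that $\tr[\rho_{l',k'}E]=0$ follows from normalization of the POVM --- but these are elaborations, not deviations.
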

\begin{proof} According to Eq.~\eqref{eq:equiv_state_future}, to different futures correspond different states. In this particular case, namely, $L=2$, we need to compare different $T_{l,k}^1$, i.e., single tuples. By Obs.~\ref{obs:orth_range}, such states must have orthogonal ranges. These two conditions already provide the minimal number of orthogonal states necessary to reach a given extreme point of the AoT polytope. Intuitively, orthogonality is the only relevant property for obtaining different futures, hence a minimal realization requires only pure states. This is confirmed by the explicit construction below, which uses only pure states. Given the tuples $\Gamma_{1,1},\Gamma_{2,1},\ldots ,\Gamma_{2,S}$,  there may be repetitions, which in this simple case of $L=2$ corresponds to having equivalent futures. We can rewrite them as $d$ tuples $\{\Gamma_1,\ldots,\Gamma_d\} =\{\Gamma_{1,1},\Gamma_{2,1},\ldots ,\Gamma_{2,S}\}$ with inequivalent futures, i.e., $\Gamma_{i}\neq \Gamma_j$. We associate to each 
of them a vector $\ket{k}$ from the ONB $\{ \ket{k}\}_{k=1}^d$. Without loss of generality, we can assume that $\Gamma_{1,1}=\Gamma_1=(0,\ldots,0)$, i.e., we fix all the measurement outcomes at the first step to be zero. This simply means that we relabel the outcome of all measurements such that $0$ is obtained for all of them on the initial state. Then, we fix the initial state as $\rho_{\rm in} = \ketbra{1}$. The measurements are constructed as follows
\begin{equation}\label{eq:povm_const_gamma}
E_{a|x} = \sum_{j \in J_{a|x}} \ketbra{j}, \text{ with } J_{a|x} := \{ j \in \{1,\ldots,d\}\ |\ [\Gamma_j]_x = a\},
\end{equation}
for $a=0,\ldots,O-1$, $x=1,\ldots,S$. Clearly, $E_{a|x}\geq 0$ and $\sum_{a} E_{a|x} = \openone$, for all $x$, so they are valid POVMs. In particular, $E_{0|x}=\ketbra{1}+ \sum_{j \in J_{0|x},j\neq 1} \ketbra{j}$ for all $x$. The corresponding Kraus operators $\{K_{0|x}^j\}_{j\in J_{0|x}}$ providing the postmeasurement state, i.e., in the Schr\"odinger picture $\rho \mapsto \sum_j K_{0|x}^j \rho K_{0|x}^{j\dagger}$, are of the form $K_{0|x}^1=\ket{s}\bra{1}$ if $\Gamma_s = \Gamma_{2,x}$ or a tuple with equivalent future and $K_{0|x}^j=\ketbra{j}$ for $j\in J_{0|x}$ and $j\neq 1$. By construction, there are at most $S+1$ tuples, hence this number provides an upper bound on the minimal dimension necessary to reach any extreme point of $P_2^{O,S}$.
\end{proof}

We now discuss a specific example in $P_2^{2,3}$ to illustrate how Theorem \ref{Theoreml2} can be applied in order to determine the dimension which is necessary and sufficient to realize a given extreme point. For this we  consider the extreme point given by $\Gamma_{1,1}=(0,0,0)$, $\Gamma_{2,1}=(0,0,0)$,  $\Gamma_{2,2}=(1,1,1)$ and $\Gamma_{2,3}=(0,0,1)$. There are three inequivalent tuples in the history given by $\Gamma_{1,1}, \Gamma_{2,2}$ and $\Gamma_{2,3}$, as $\Gamma_{1,1}=\Gamma_{2,1}$. Hence, we have that any system which can realize this extreme point has at least dimension three. The Kraus operators of the measurements $x\in\{1, 2, 3\}$ for outcome $0$, which allow us to obtain this extreme point from the initial state $\rho_{\rm in} = \ketbra{1}$, can be chosen as $K_{0|x}^1=\ket{x}\bra{1}$ for all $x$ and $K_{0|x}^2=\ketbra{3}$ for $x=1,2$ (for $x=3$ there is only a single Kraus operator). Note that for the outcome $1$ it is sufficient to know the effects (as the post-measurement state does not need to be specified), which directly follow from the ones for outcome $0$. That is, we have $\mathcal{E}_{1,x}=\ketbra{2}$ for $x=1,2$ and $\mathcal{E}_{1,3}=\ketbra{2}+\ketbra{3}$. 

The same argument used to derive Theorem \ref{Theoreml2} can be generalized to sequences of arbitrary length.

\begin{theorem}\label{theomindimgen}
The minimal dimension $d$ required to reach an extreme point $p$ of $P_L^{O,S}$, is given by the number of inequivalent futures $T_{l,k}^r$ in the history $T_{1,1}$. 
\end{theorem}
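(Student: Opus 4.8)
The plan is to establish the two inequalities $d\ge N$ and $d\le N$, where $N$ is the number of inequivalent futures $T_{l,k}^r$ occurring in the history $T_{1,1}$, following the same logic as in the proof of Theorem~\ref{Theoreml2} but now comparing multi-step subtrees instead of single tuples. The lower bound will come from orthogonality: nodes with inequivalent futures must be assigned mutually orthogonal states in any realization. The upper bound will follow from an explicit construction that assigns one orthonormal basis vector to each equivalence class of futures and implements the tree transitions with rank-one Kraus operators, using only pure states.

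For the lower bound, I would first pick one representative node from each of the $N$ equivalence classes, obtaining nodes with pairwise inequivalent futures and associated states $\rho_1,\dots,\rho_N$ in an arbitrary realization. If two such representatives have inequivalent futures, then $T_{l,k}^r\neq T_{l',k'}^r$ for $r=\min\{L-l,L-l'\}$, and tracing down the tree along a path of inputs whose deterministic outputs agree for both nodes, until the first step at which some input $x$ yields different outcomes, produces a finite input sequence whose output sequence occurs with probability one for one node and probability zero for the other. Encoding this sequence as a single effect $E=\II_{a_1|x_1}\circ\cdots\circ\II_{a_s|x_s}(\openone)$, as explained after Observation~\ref{obs:orth_range}, gives $\tr\{\rho_i E\}=1$ and $\tr\{\rho_j E\}=0$, so by the sequence version of Observation~\ref{obs:orth_range} the states $\rho_i$ and $\rho_j$ have orthogonal ranges. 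Since this holds for every pair, the $N$ representatives carry pairwise orthogonal states and hence $d\ge N$.

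For the upper bound, I would label the equivalence classes $C_1,\dots,C_N$, fix an orthonormal basis $\{\ket{m}\}_{m=1}^N$, take the root $(1,1)\in C_1$ to fix $\rho_{\rm in}=\ketbra{1}$, and (as in Theorem~\ref{Theoreml2}) relabel outcomes if convenient. Because equivalent futures share the same root tuple, the outcome $[\Gamma_{C_m}]_x=:a$ produced by input $x$ on class $C_m$ is well defined, and because equivalent futures have, input by input, equivalent child futures, the class $C_{m'}$ reached from $C_m$ under input $x$ is also well defined. I then set $E_{a|x}=\sum_{m:\,[\Gamma_{C_m}]_x=a}\ketbra{m}$ and $K_{a|x}^{m}=\kb{m'}{m}$ for each $m$ with $[\Gamma_{C_m}]_x=a$. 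One checks that $E_{a|x}\ge 0$ and $\sum_a E_{a|x}=\openone$, and that on a pure state $\ketbra{m}$ only the single Kraus operator $\kb{m'}{m}$ acts nontrivially, mapping it to the pure state $\ketbra{m'}$ of the child class; running the whole sequence $x_1,\dots,x_L$ from $\rho_{\rm in}$ then reproduces exactly the deterministic outputs of $p$, so $N$ dimensions suffice and $d\le N$.

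The step I expect to be the main obstacle is establishing well-definedness of the transition map $C_m\mapsto C_{m'}$, i.e.\ that the child class depends only on $C_m$ and $x$ and not on the node chosen to represent $C_m$, including when representatives sit at different time steps so that the relevant comparison length shrinks by one at each transition. Concretely, I must show that $T_{l,k}^r=T_{l',k'}^r$ with $r=\min\{L-l,L-l'\}$ forces $T_{l+1,k_x}^{r-1}=T_{l'+1,k'_x}^{r-1}$ for the children under input $x$, which is precisely the recursive subtree structure noted before Observation~\ref{obs:orth_range}. Once this is in place, the global consistency of the single assignment of Kraus operators across all time steps, the validity of the POVMs, and the preservation of purity along the evolution follow routinely.
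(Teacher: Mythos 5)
Your overall strategy coincides with the paper's own proof: a lower bound from Observation~\ref{obs:orth_range} applied to sequence-effects (your version of this half is in fact more detailed and careful than the paper's one-line argument), and an upper bound from a purely classical, measure-and-prepare construction with one basis vector per ``class'' of futures. However, the step you yourself flag as the main obstacle is a genuine gap, and the fact you propose to close it with --- that $T_{l,k}^r=T_{l',k'}^r$ forces $T_{l+1,k_x}^{r-1}=T_{l'+1,k'_x}^{r-1}$ for the children --- is true but insufficient. The reason is that ``having equivalent futures'' is \emph{not a transitive relation}: nodes at different time steps are compared only down to the shallower of the two remaining depths, so a late node can be equivalent to two earlier nodes that are inequivalent to each other. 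Hence the partition into ``equivalence classes $C_1,\dots,C_N$'' that your construction starts from does not exist in general, and no pairwise statement about children can make the map $C_m\mapsto C_{m'}$ well defined.

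Concretely, take $S=1$, $O=2$, $L=6$ and the extreme point with deterministic outputs $0,1,0,0,1,0$; write $u_t$ for the node at time step $t$. The equivalent pairs are exactly $\{u_1,u_4\}$, $\{u_1,u_6\}$, $\{u_4,u_6\}$, $\{u_2,u_5\}$, $\{u_3,u_6\}$; in particular $u_6\sim u_1$ and $u_6\sim u_3$ while $u_1\not\sim u_3$. Grouping $u_6$ with $u_1,u_4$ (the natural ``first match'', which is also what the paper's own greedy assignment prescribes) yields cells $\{u_1,u_4,u_6\},\{u_2,u_5\},\{u_3\}$, and then the transition out of $\{u_2,u_5\}$ is ill defined: the child of $u_2$ is $u_3$ while the child of $u_5$ is $u_6$, and these sit in \emph{different} cells even though $u_3\sim u_6$, exactly as your recursive fact guarantees. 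The unique valid three-cell partition is $\{u_1,u_4\},\{u_2,u_5\},\{u_3,u_6\}$, and finding it requires an idea beyond pairwise equivalence: for instance, admit a node into a cell only if its future is a truncation of the future of the cell's earliest (shallowest) member, the ``representative'', and define the cell transitions through representatives. One then checks that members of such a cell are automatically pairwise equivalent, that the merges forced by the parent cells never conflict with this rule, and that a new cell is opened exactly when a node is inequivalent to all previously processed ones, so the number of cells equals the number of inequivalent futures and both bounds match. The theorem itself survives; what is missing in your write-up is this anchoring of cells to representatives (or an equivalent cascade-merging argument). To be fair, the paper's published proof is silent on the very same point and its literal prescription fails on the same example, so your attempt is no less rigorous than the original --- but as written, neither argument is complete.
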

\begin{proof}
The proof generalizes straightforwardly from the case $L=2$ above. Again, different futures must correspond to different states, such states must be orthogonal, and can be chosen to be pure, providing a minimal-dimension representation.

The explicit construction of the model can then be extended from the previous one. Let us assign the state $\ket{1}$ as initial state, i.e., to $T_{1,1}$. Then compare $T_{1,1}$ with $T_{2,k}$, for $k=1,\ldots,S$, if they are equivalent, assign the same state $\ket{1}$ to $T_{2,k}$, otherwise, assign a new orthogonal state $\ket{2},\ket{3},\ldots$ to the future $T_{2,k}$. Repeat the same operation for $T_{3,k'}$, $k'=1,\ldots ,S^2$, assigning new state for any future inequivalent to $T_{1,1}$ or any $T_{2,k}$. Repeat again until the end of the tree, i.e., $T_{L,k}=\Gamma_{L,k}$, $k=1,\ldots, S^{L-1}$. To each node $\Gamma_{l,k}$ of the tree $T_{1,1}$ is then assigned a pure state $\ket{l,k}\in \{ \ket{j}\}_{j=1}^d$, possibly with repetitions. As in Th.~\ref{Theoreml2}, POVMs elements are constructed as projectors providing the correct outcomes for each state $\ket{j}$, as in Eq.~\eqref{eq:povm_const_gamma}. Similarly, Kraus operators $\{ K_{a|x}^j\}_j$, associated with the POVM element $E_{a|x} = \sum_{j \in 
J_{a|x}} \ketbra{j}$, consist in measure-and-prepare operations $K_{a|x}^j=\ket{i}\bra{j}$, when the state $\ket{j}$ emits the output $a$ for the measurement $x$ and transition to the state $\ket{i}$, all with probability one. For the last time-step, i.e., from $L-1$ to $L$, one can use diagonal Kraus operators analogously to the construction in Th.~\ref{Theoreml2}.
\end{proof}

Note that the above protocol does not involve any coherences, as all states and effects are diagonal in the same basis and the state-update rule also involves transitions within the same basis, hence, it can be realized with a classical system.

\section{Lower bound on the dimension which is necessary to realize any extreme point}\label{sec:lowerb}
In the following, we provide a construction of extreme points for any $L$ from which one can determine a lower bound on the minimal dimension required for its realization. This lower bound is then automatically also a lower bound on the dimension necessary to realize any extreme point. The main result is that the minimal dimension scales, at least, exponentially in $L$. 
Let us consider the polytope $P_L^{O,S}$. 
The main idea of the proof can be briefly explained as follows. Consider a sequence of length $L$ and take a time-step $j< L$. If the number of remaining time-steps $L-j$ is big enough, for the tuples $\{\Gamma_{j,s}\}_s$ we can choose their futures $\{T_{j,s}\}_s$ to be all different, hence, each $\Gamma_{j,s}$ will be associated to an orthogonal state and the number of such tuples will provide a lower bound on the minimal dimension necessary for their realization. Our argument consists in estimating the maximum $j$ such that this is possible.

 For the history $T_{1,1}$ at time-step $j$, there exist $S^{j-1}$ different subtrees $T_{j,k}$. If $j$ is properly chosen, namely, $L-j$ is large enough such that we can construct different futures $T_{j,k}$ for $k=1,\ldots, S^{j-1}$, then the realization of such an extreme point requires at least $d=S^{j-1}$. First notice that the number of possible futures of length $x$ is given by $(O^S)^{\frac{S^{x+1}-1}{S-1}}= O^{\frac{S^{x+2}-S}{S-1}}$. 
Hence, $j$ must be selected in such a way that the remaining sequence allows us to assign different futures (which are of length $L-j$) to each node, namely as the largest integer such that
\begin{equation}
S^{j-1} \leq O^{\frac{S^{L-j+2}-S}{S-1}}.
\end{equation}
We can further simplify the expression using the identity $O^x=S^{x \log_S O}$
\begin{equation}
\begin{split}
S^{j-1}\leq S^{\frac{S^{l-j+2}-S}{S-1}\log_S O} \Leftrightarrow j-1 \leq \frac{S^{l-(j-2)}-S}{S-1}\log_S O  = S^{-(j-1)} \left(\frac{S^l}{S-2} \log_S O\right) - \frac{S}{S-1}\log_S O.
\end{split}
\end{equation}
This equation can be solved in terms of the principal branch of Lambert function $W$, namely, the function implicitly defined as the solution to the equation $x e^x = k$, i.e., $x e^x = k \Leftrightarrow x=W(k)$. In this case, let us see how to solve it for the equation $x = a S^{-x} + b$
\begin{equation}
\begin{split}
x = a S^{-x} + b \Leftrightarrow (x-b)S^{x-b} = a S^{-b} \Leftrightarrow  (x-b)e^{(x-b) \ln S} \ln S= a S^{-b}\ln S,\\
 \text{We have } y e^y = a S^{-b}\ln S, \text{ for } y:= (x-b)\ln S \Leftrightarrow y=W(a S^{-b} \ln S) \Leftrightarrow x = \frac{W(a S^{-b} \ln S)}{\ln S} + b.
\end{split}
\end{equation}
By substituting $x=j-2$, $a= \left(\frac{S^L}{S-1} \log_S O\right)$, $b=- \frac{S}{S-1}\log_S O$ one obtains the condition 
\begin{equation}\label{eq:ineq_j_W}
j\leq \frac{W\left[\left(\frac{S^L}{S-1} \log_S O\right) S^{\frac{S}{S-1}\log_S O} \ln S\right]}{\ln S} - \frac{S}{S-1}\log_S O + 2,
\end{equation}
which gives the maximal $j$ as
\begin{equation}
j = \floor{ \frac{W\left[\left(\frac{S^L}{S-1} \log_S O\right) S^{\frac{S}{S-1}\log_S O} \ln S\right]}{\ln S} - \frac{S}{S-1}\log_S O + 2},
\end{equation}
where $\floor{x}$ denotes the floor of $x$, i.e., the largest integer smaller than $x$.

To compute the asymptotic scaling, one can write
\bea
S^{j-2}\leq O^{-\frac{S}{S-1}}  e^{W[\frac{S^L  O^{\frac{S}{S-1} } \ln S \log_S O}{S-1}]}.
\eea
Using that $\ln (x)- \ln [\ln (x)]+ \frac{\ln [\ln (x)]}{2 \ln (x)}\leq W(x)$ for $x\geq e$ \cite{Hoorfar2008} we can obtain a lower bound on the minimal dimension as follows. For $m\in \mathbb{R}$ such that
\begin{equation}
\begin{split}
S^{m-2}\leq \frac{ O^{-\frac{2 S}{S-1}}S^{L} \ln O }{S-1} \left(\ln \left[\frac{S^L  O^{\frac{S}{S-1} } \ln O }{S-1}\right]\right)^{-1+\left(2 \ln \left[\frac{S^L  O^{\frac{S}{S-1} } \ln O }{S-1}\right]\right)^{-1} },
\end{split}
\end{equation} 
the minimal dimension satisfies
\begin{equation}
d_{\min } \geq S^{m-2},
\end{equation}
where the ``$-2$'' term takes into account the fact that $m$ may not be an integer. For large $L$ such a lower bound scales as
\begin{equation}
A S^{L-1}  (\ln[B S^L])^{-1+(2\ln[B S^L])^{-1}}\approx {\alpha} e^{\beta L+ (\frac{\delta}{L}- \gamma) \ln L },
\end{equation}
for appropriately chosen constants $A,B,\alpha,\beta,\gamma,\delta$. This proves that the minimal dimension required to reach any extreme point scales at least exponentially (up to logarithmic corrections). In Appendix~\ref{sec:appA}, we present a different construction of an extreme point which can provide an improved lower bound on the minimal dimension, however no closed formula on the scaling.

\section{Combining temporal inequalities}\label{sec:combining}

In the following, we present a method for deriving new inequalities for temporal correlations for sequences of length $nL$ with $n\in \mathbb{N}^+$, based on the knowledge of inequalities for the shorter length $L$.  
It instructive to first describe the method for a simple example, based on the inequalities for the case $L=O=S=2$ derived in Ref.~\cite{Hoffmann2018}. The original inequalities were derived by computing the qubit bound for expressions of the form $\sum_{x,y=0,1} p(a_{x} b_{xy}|x,y)$, for a specific choice of the outputs $\{a_x, b_{xy}\}_{x,y}$ where the the algebraic bound $4$ is achieved by an extreme point of AoT polytope $P_2^{2,2}$, i.e., $p(a_x b_{xy}|xy)=1$ for all $x,y$. Up to symmetries, four expression were derived, namely
\begin{align}
\BB_{1} := p(00|00) + p(00|11) + p(01|01) + p(01|10)\leq C_1,
\nonumber
\\
\BB_2 := p(01|00) + p(01|11) + p(00|01) + p(00|10)\leq C_2,
\nonumber
\\
\BB_3 := p(01|00) + p(00|11) + p(01|01) + p(01|10) \leq C_3,
\nonumber
\\
\BB_4 := p(01|00) + p(01|11) + p(01|01) + p(00|10) \leq C_4.
\end{align}
each one corresponding to one of the extreme points of the AoT polytope $P^{2,2}_2$, which cannot be reached by qubit strategy, namely,
\begin{align}
e_{1} : p(00|00) = p(00|11) = p(01|01) = p(01|10) =1 , \text{ and } 0 \text{ otherwise;}
\nonumber
\\
e_2 : p(01|00) = p(01|11) = p(00|01) = p(00|10) =1 , \text{ and } 0 \text{ otherwise;}
\nonumber
\\
e_3 : p(01|00) = p(00|11) = p(01|01) = p(01|10)  =1 , \text{ and } 0 \text{ otherwise;}
\nonumber
\\
e_4 : p(01|00) = p(01|11) = p(01|01) = p(00|10) =1 , \text{ and } 0 \text{ otherwise.}
\end{align}
In general, to each extreme point $e_i$, with components labelled by $\vec{a}=(a_1,\ldots,a_L),\vec{x}=(x_1,\ldots,x_L)$, i.e., ${[e_i]_{\vec{a},\vec{x}}=p(\vec{a}|\vec{x})}$, we can associate a temporal inequality
\begin{equation}\label{eq:def_Bi}
\BB_i = \sum_{\vec{a},\vec{x} } c_{\vec{a},\vec{x}}^{(i)} p(\vec{a}|\vec{x})\leq C_i, 
\end{equation}
where $c_{\vec{a},\vec{x}} := [e_i]_{\vec{a},\vec{x}}$, $c_{\vec{a},\vec{x}}\in \{0,1\}$ since $e_i$ is a deterministic strategy, and $C_i$ is the bound for a given dimension $d$ of the quantum system, corresponding to the algebraic bound $\sum_{\vec{a},\vec{x}} c_{\vec{a},\vec{x}}$, if the extreme point $e_i$ can be reached in dimension $d$. Note that the same bounds also hold for any other element within the same RE class as $e_i$.

Given two deterministic strategies for length two, we can construct a strategy for length four simply by combining them in the following way:
\begin{equation}\label{eq:eiprod}
p(a_1,a_2,a_3,a_4|x_1,x_2,x_3,x_4)=p(a_1,a_2|x_1,x_2) p(a_3,a_4|a_1a_2;x_1,x_2,x_3,x_4)
\end{equation}
where $p(a_1,a_2|x_1,x_2)$ and each $p(a_3,a_4|a_1a_2;x_1,x_2,x_3,x_4)$, is a deterministic strategy associated with an extreme point $e_k\in \{e_1,e_2,e_3,e_4\}$, implying that
\begin{equation}\label{eq:4step_factor}
p(\td a_1[x_1],\td a_2[x_1,x_2] \ |\ x_1,x_2) p(\td a_3[x_1,x_2,x_3],\td a_4[x_1,x_2,x_3,x_4]\ |a_1a_2;x_1,x_2,x_3,x_4)=1,
\end{equation}
for properly chosen functions $\{\td a_i\}$ of the inputs $\{x_i\}$, and $0$ otherwise.

We denote the corresponding extreme point of $P_{4}^{2,2}$ as $e_{\bf k} = (e_{k_1^1},e_{k_2^1},\ldots, e_{k_2^4})$ with ${\bf k}=(k_1^1, k_2^1,\ldots, k_2^4)$,  where $k_1^1$ labels the extreme point used for the first time-period , i.e., the two steps $t_1$ and $t_2$, and $k_2^i$ for $i=1,\ldots,4$ denote the possible extreme points for the second time period, i.e., steps $t_3$ and $t_4$, each belonging to one of the different branches of the evolution depending on the inputs $x_1,x_2$, as shown in Eq.~\eqref{eq:4step_factor}.  We, then, construct the associated inequality
\begin{equation}
\BB_{\bf k}= \sum_{\vec{a},\vec{x}} c_{\vec{a},\vec{x}} p(\vec{a}|\vec{x}) \leq  C_{k_1} C_{k_2}.
\end{equation}
where $C_{k_j}=\underset{i}\max\, C_{k_j^i}$, is the maximum taken for a given time period over all possible branches.
The proof of the bound is straightforward
\begin{equation}
\begin{split}
\BB_{\bf k}&= \sum_{\vec{a},\vec{x}} c_{\vec{a},\vec{x}} p(\vec{a}|\vec{x}) = \sum_{x_1,x_2,x_3,x_4} p(\td a_1[x_1],\td a_2[x_1,x_2],\td a_3[x_1,x_2,x_3],\td a_4[x_1,x_2,x_3,x_4]\ |x_1,x_2,x_3,x_4) \\
 &=  \sum_{x_1,x_2} p(\td a_1[x_1],\td a_2[x_1,x_2]\ |x_1,x_2) \sum_{x_3,x_4} p(\td a_3[x_1,x_2,x_3],\td a_4[x_1,x_2,x_3,x_4]\ |\td a_1[x_1],\td a_2[x_1,x_2]; x_1,x_2,x_3,x_4)\\
 &\leq \sum_{x_1,x_2} p(\td a_1[x_1],\td a_2[x_1,x_2]\ |x_1,x_2) C_{k_2} \leq C_{k_1} C_{k_2},
\end{split}
\end{equation}
where we used the AoT condition to break the probability and $C_{k_2}$ as an upper bound to the expression  $\sum_{x_3,x_4} p(\td a_3[x_1,x_2,x_3],\td a_4[x_1,x_2,x_3,x_4]\ |\td a_1[x_1],\td a_2[x_1,x_2]; x_1,x_2,x_3,x_4)$ for any value of $x_1,x_2,\td a_1[x_1],\td a_2[x_1,x_2]$, and finally the bound $C_{k_1}$. 

It is obvious that the above result depends only on the way of choosing a deterministic strategy, i.e., an extreme point of the AoT polytope, as a product strategy as in Eq.~\eqref{eq:eiprod}, the way of constructing the corresponding expression $\BB_{\bf k}$, and the knowledge of the bounds $\{C_{k_j^i}\}_{i,j}$ for the single expressions $\{\BB_{k_j^i}\}_{i,j}$ for ${\bf k}=(k_1^1,\ldots,k_n^m)$. We can then generalize the result as follows.

\begin{theorem}\label{Th8}
Given a collection of temporal inequalities involving $O$ outcomes, $S$ settings, and length $L$,
\begin{equation}
\BB_i = \sum_{\vec{a},\vec{x} } c_{\vec{a},\vec{x}}^{(i)} p(\vec{a}|\vec{x})\leq C_i, \text{ for } i=1,\ldots,N,
\end{equation}
associated to an extreme point $e_i$ and valid for quantum systems of dimension $d$, then, the following inequality for sequences of length $nL$ 
 \begin{equation}
 \BB_{\bf k} := \sum_{\vec{x}} p(\td a_1[x_1],\td a_2[x_1,x_2], \ldots \td a_{nL}[x_1,\ldots,  x_{nL}] |x_1,\ldots, x_{nL}) \\
\leq \prod_{j=1}^n C_j,
\end{equation}
with ${\bf k}=(k_1^1,\ldots,k_n^{m})$, $e_{k_j^i}\in \{e_1, \ldots, e_N\}$ and $C_{j}=\underset{i}\max\, C_{k_j^i}$
 also holds for quantum systems of the same dimension.
\end{theorem}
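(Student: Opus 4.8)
The plan is to generalize the explicit $n=2$, $L=2$ computation displayed above into an induction on the number of blocks $n$. First I would partition the $nL$ time-steps into $n$ consecutive blocks of length $L$, writing $\vec{x}=(\vec{x}^{(1)},\ldots,\vec{x}^{(n)})$ and $\vec{a}=(\vec{a}^{(1)},\ldots,\vec{a}^{(n)})$, where $\vec{x}^{(j)}$ and $\vec{a}^{(j)}$ collect the inputs and outputs of block $j$. The AoT conditions then allow the factorization $p(\vec{a}|\vec{x})=\prod_{j=1}^n p(\vec{a}^{(j)}\,|\,\vec{a}^{(1)}\cdots\vec{a}^{(j-1)};\vec{x}^{(1)}\cdots\vec{x}^{(j)})$, in which each factor conditions only on the outputs of earlier blocks and on the inputs of the current and earlier blocks --- exactly the decomposition used to break the probability in the $L=4$ case. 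By construction of the product extreme point $e_{\bf k}$, once the history of the earlier blocks is fixed, the $j$-th factor restricted to that branch coincides with one of the building-block deterministic strategies $e_{k_j^i}$.

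Next I would carry out the peeling argument. Writing $\BB_{\bf k}=\sum_{\vec{x}} p(\td a_1[x_1],\ldots,\td a_{nL}[x_1,\ldots,x_{nL}]\,|\,\vec{x})$ and inserting the AoT factorization, I would sum first over the inputs $\vec{x}^{(n)}$ of the last block. For each fixed branch this inner sum is precisely $\BB_{k_n^i}$ evaluated on the conditional block correlation (the deterministic output functions $\td a$ make the building-block coefficients $c^{(k_n^i)}$ act as the indicator picking out that single term), hence it is bounded by $C_{k_n^i}\le C_n=\max_i C_{k_n^i}$. Since this bound is uniform over all branches, the factor $C_n$ can be pulled out of the remaining sum. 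What is left is a quantity of exactly the same form but with $n-1$ blocks, so the induction hypothesis gives the bound $\prod_{j=1}^{n-1}C_j$, and multiplying by $C_n$ yields $\prod_{j=1}^n C_j$. The base case $n=1$ is simply one of the hypotheses $\BB_i\le C_i$.

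The step requiring care --- and the main obstacle --- is justifying that each conditional block factor is itself a legitimate length-$L$ correlation realizable in dimension $d$, so that the building-block inequality $\BB_{k_j^i}\le C_{k_j^i}$ genuinely applies to it. This is where the finite-state-machine assumption enters: after $jL$ time-steps the system is in some post-measurement state of the same $d$-dimensional space, and because the instruments $\{\II_{a|x}\}_a$ are time-independent, the subsequent $L$ steps starting from that state constitute a bona fide length-$L$ dimension-$d$ experiment. Its correlations therefore obey every dimension-$d$ inequality, in particular the relevant $\BB_{k_j^i}\le C_{k_j^i}$, uniformly over which branch (i.e.\ which post-measurement state) one conditions on; this uniformity is exactly what licenses extracting $C_j=\max_i C_{k_j^i}$ from the sum at each peeling step.

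Finally I would note that the dimension claim is immediate from the same construction: the combined strategy runs on the same $d$-dimensional state space used for the individual blocks, re-initializing at each block boundary to the branch-appropriate building-block strategy, so no increase in dimension is incurred.
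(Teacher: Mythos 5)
Your proof is correct and follows essentially the same route as the paper: the paper proves the four-step ($n=2$, $L=2$) case by the identical AoT factorization and peeling argument and then declares the general case analogous, which your block-wise induction simply formalizes. Your explicit justification that each conditional block correlation is itself a dimension-$d$, length-$L$ experiment (via time-independence of the instruments and the post-measurement state living in the same space) is exactly the step the paper uses implicitly when it bounds the inner sum by $C_{k_2}$ ``for any value of $x_1,x_2,\td a_1[x_1],\td a_2[x_1,x_2]$.''
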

The proof of this theorem is analogous to the case of four time steps presented above. 
As an explicit example for this construction consider the extreme point $e_{\bf k} = (e_i,\ldots ,e_i)$ of length $2n$ with $i\in\{1,2,3,4\}$ and $n\in \mathbb{N}^+$ and the corresponding inequality $\BB_{\bf k}$. Then, according to Theorem \ref{Th8}, it holds that for qubits $\BB_{\bf k}\leq (C_i)^n$. Using Theorem \ref{theomindimgen}, it can be easily seen that  with a three-dimensional system one can reach the algebraic maximum of $\BB_{\bf k}=4^n$. It follows that the ratio of separation between a qubit and a qutrit is exponentially decreasing with the length of the sequence, i.e.  $(C_i/4)^n$. 

\section{Imperfect Implementation of Time-Independent Measurements}\label{sec:imperf}

The results obtained so far assume that the measurements are time-independent, i.e. the same input indicates that also the same measurement is implemented. Here, we  discuss how a deviation from this assumption influences our results. Before proceeding further, it is helpful to remark what we mean by imperfect implementation. What does it mean to ``perform the same measurement twice''? Consider the basic example of an apparatus that measures the spin of a particle either along the $X$ direction or the $Z$ direction with probability $1/2$ each. Clearly, in each round of the experiment when a sequence of two measurements is performed there is 50\% chance that two different measurements are performed, i.e., $X,Z$ or $Z,X$. However, according to our definition of time-independent quantum instruments, this situation is still allowed, since such an uncertainty is already contained in the definition of quantum instrument. The notion of imperfect implementation, hence, does not deal with random fluctuations in the 
measurement apparatus, but rather with some time-dependent drift in the parameters describing the measurement apparatus, e.g., a drift in the magnetic field orientation in the spin example. Notice, however, that Markovian time-evolutions can be still be absorbed in the definition of quantum instruments with a proper choice of measurement times.
 
In the following, we quantify the effect of such imperfect implementations of quantum instruments on the observed correlations. Such deviations can be quantified in terms of the diamond norm \cite{Kitaev1997}. It is also important to remark that in the following, it is more convenient to use the Schr\"odinger picture for the representation of quantum instruments.
This corresponds to take the dual $\II^*$ of the instruments appearing in Eq.~\eqref{eq:def_qprob}, acting now on states rather than observables. To avoid a heavy notation, however, we drop the superscript $^*$ in the remaining part of this section.

If $\II_{a|x}$ is the desired CP map for input $x$ and outcome $a$ and $\tilde{\II}_{a|x}$ is the one that is instead implemented in the experiment then $||\II_{a|x}-\tilde{\II}_{a|x}||_{\diamond}\leq \epsilon$ for all $x$ and $a$, where the diamond norm of a CP map $\II$ is defined as $\| \II \|_{\diamond}:= \max_{\rho_{AB} } \| \II_A\otimes {\rm id}_B (\rho_{AB}) \|_{\rm tr}$. Note that from the definition of the diamond norm it straightforwardly follows that $\tr[\tilde{\II}_{a|x}(\rho)-\II_{a|x}(\rho)]\leq ||\II_{a|x}-\tilde{\II}_{a|x}||_{\diamond}$ for all density matrices $\rho$. As we will see, this allows us to derive bounds on the influence of such a deviation on quantities that are linear in $p(ab\ldots |xy\ldots)$. In order to illustrate the basic idea we  consider here first two time steps and then three time steps, however, it is straightforward to generalize the bound to an arbitrary number of time steps. In particular, we obtain that
\begin{align}\nonumber
p(ab|xy)=&\tr \{\tilde{\II}_{b|y}[\II_{a|x}(\vrho_{\rm in})]\}\\\nonumber
=&\tr \{(\tilde{\II}_{b|y}-\II_{b|y})[\II_{a|x}(\vrho_{\rm in})]\}+\tr \{\II_{b|y}[\II_{a|x}(\vrho_{\rm in})]\}\\
\leq& \tr \{\II_{b|y}[\II_{a|x}(\vrho_{\rm in})]\}+\epsilon\,\tr [\II_{a|x}(\vrho_{\rm in})]
\end{align}
 and 
\begin{align}\nonumber
p(abc|xyz)=&\tr (\tilde{\II}_{c|z}\{\tilde{\II}_{b|y}[\II_{a|x}(\vrho_{\rm in})]\})\\\nonumber
=&\tr [(\tilde{\II}_{c|z}-\II_{c|z})\{\tilde{\II}_{b|y}[\II_{a|x}(\vrho_{\rm in})]\}]+\tr (\II_{c|z}\{\tilde{\II}_{b|y}[\II_{a|x}(\vrho_{\rm in})]\})\\ \nonumber
\leq& \tr (\II_{c|z}\{\tilde{\II}_{b|y}[\II_{a|x}(\vrho_{\rm in})]\})+\epsilon\, \tr [\II_{a|x}(\vrho_{\rm in})]\\\nonumber
=&\tr (\II_{c|z}\{(\tilde{\II}_{b|y}-\II_{b|y})[\II_{a|x}(\vrho_{\rm in})]\})+\tr (\II_{c|z}\{\II_{b|y}[\II_{a|x}(\vrho_{\rm in})]\}+\epsilon \,\tr [\II_{a|x}(\vrho_{\rm in})]\\
\leq &\tr (\II_{c|z}\{\II_{b|y}[\II_{a|x}(\vrho_{\rm in})]\}+2 \epsilon \,\tr [\II_{a|x}(\vrho_{\rm in})].
\end{align}
Note that we used here multiple times that $\tr[\tilde{\II}_{a|x}(\rho)-\II_{a|x}(\rho)]\leq ||\II_{a|x}-\tilde{\II}_{a|x}||_{\diamond}\leq\epsilon$ with $\tr (\rho)=1$, that $\tr (\tilde{\II}_{c|z}\{\tilde{\II}_{b|y}[\II_{a|x}(\vrho_{\rm in})]\})= \tr [\II_{a|x}(\vrho_{\rm in})]\tr \{\tilde{\II}_{c|z}[\tilde{\II}_{b|y}(\tilde{\rho}_{a|x})]\}$, where $\tilde{\rho}_{a|x}$ denotes the normalized post-measurement state for measurement $x$ and outcome $a$, and that 
$\tr (\II_{c|z}\{(\tilde{\II}_{b|y}-\II_{b|y})[\II_{a|x}(\vrho_{\rm in})]\})\leq  \| \II_{c|z}\{\tilde{\II}_{b|y}-\II_{b|y}\}\|_{\diamond} \tr[ \II_{a|x}(\vrho_{\rm in})] \leq \|\tilde{\II}_{b|y}-\II_{b|y}\|_{\diamond} \tr[ \II_{a|x}(\vrho_{\rm in})]\leq \varepsilon\tr[ \II_{a|x}(\vrho_{\rm in})]$, which follows from the contractivity of the trace distance under completely positive trace-nonincreasing operations. From this argument it can be easily seen that the deviation of the  probability $p(ab\ldots |xy\ldots)$  for $l$ time steps due to imperfect time-independent measurements can be bounded by $(l-1)\epsilon \,\tr [\II_{a|x}(\vrho_{\rm in})]$. Hence, temporal inequalities are still able to provide a reliable lower bound on the dimension.

As a final remark, it is interesting to notice the following. The above argument assumes certain quantum properties of the operations involved, hence, at least some partial characterization of the experimental devices. However, we simply noticed that instruments that are ``close'' in the quantum mechanical sense (and arguably the diamond norm is the natural distance among them) give rise to probability distributions that are again ``close'', with an error that scales linearly in the measurement length. 
Assumptions on such a distance, even if based on quantum mechanics,
do not necessarily require a full characterization of the experimental devices. It would be interesting to estimate the diamond norm in a device independent way, then our result helps in the design of improved experimental tests of temporal quantum correlations that rely on minimal assumptions and do not require a complete characterization of the measurement devices.

%%%%%%%%%%%%%%%%%%%%%%%%%%%%%%%%%%%%%%%%%%%%%%%%%%%
\section{Conclusion and Outlook}\label{sec:concl}
%%%%%%%%%%%%%%%%%%%%%%%%%%%%%%%%%%%%%%%%%%%%%%%%%%%
In this work, we studied the resources required to realize AoT correlations 
within quantum mechanics. We first identified which extreme points of the 
AoT polytope can be obtained by using the same  protocol followed by some 
classical post-processing of the input and output for the case of a small 
number of two-outcome measurements. Then we provided for an arbitrary given 
extreme point the dimension that is necessary and sufficient to realize it. 
In particular, we showed that this is given by the number of inequivalent 
futures  in the history associated with a point and we gave 
an explicit protocol that allows one to obtain it. We observed that this 
protocol does not involve any coherences and hence can be also implemented 
with a classical system. Moreover, we derived a lower bound on the minimal 
dimension that is necessary to reach an arbitrary extreme point for a given 
number of settings $S$, outcomes $O$ and time steps $L$ and we showed that 
in the asymptotic limit of long sequences this scales as $e^{\alpha L}/L$ 
(with $\alpha$ being some constant that depends on $O$ and $S$). 

In a previous work~\cite{Hoffmann2018}, extreme points of the AoT polytope 
have been used to construct dimension witnesses for sequences of short length. 
Here, we provided a general method to use these witnesses as building blocks 
for the construction of dimension witnesses for sequences of arbitrary length. 
Despite the fact that the bound on the so obtained temporal inequality is not 
necessarily tight, one finds inequalities which show an exponential scaling with respect to the 
length of the sequence. Finally, we made quantitative statements on how 
the bounds on linear temporal inequalities are affected if the assumption 
that at any time step one is able to implement the same measurement is 
violated. We showed that small deviations from these assumptions still 
allow us to deduce lower bounds on the dimension. 

There are several possible directions for future research. First, one can consider
a general point in the correlation polytope and consider the resources needed for
a simulation. This problem is challenging for two reasons: First, the quantum 
realizations for arbitrary points are difficult to find, and may require a larger 
dimension than the extremal points \cite{Mao2020}. Second, for a general point in the 
polytope a deterministic protocol is not suitable, so more general concepts, such as 
hidden Markov models \cite{Rabiner1986} or, more specifically, $\varepsilon$-transducers 
\cite{Barnett2015, Cabello2018} may be useful.

A second interesting problem comes from the observation that our simulation protocols
were purely classical, in the sense that they can be implemented using quantum states
diagonal in the computational basis. It would be
interesting to develop a general theory of temporal correlations, for 
which the quantum mechanical simulation requires less resources than the 
classical one, due to
effects like coherence  \cite{Gu2012}. This may open
a further way to test quantum devices using temporal correlations.

\section{Acknowledgements}
We thank Jannik Hoffmann, Yuanyuan Mao, and Zhen-Peng Xu for useful discussions. This work has been supported by the Austrian Science Fund (FWF) through the projects J 4258-N27 (Erwin-Schrödinger Programm), Y879-N27 (START project), ZK 3 (Zukunftskolleg), and F7113-N48 (BeyondC), by the Austrian Academy of Sciences and by the ERC (Consolidator Grant 683107/TempoQ).

%%%%%%%%%%%%%%%%%%%%%%%%%%%%%%%%%%%%%%%%%%%%%%%%%%

%%%%%%%%%%%%%      APPENDIX     %%%%%%%%%%%%%%%%%%%%%%%%%%%

%%%%%%%%%%%%%%%%%%%%%%%%%%%%%%%%%%%%%%%%%%%%%%%%%%%

\appendix

\section{Proof of Lemma \ref{LemmanumRE23}\label{AppnumRE23}}
In the following we prove Lemma \ref{LemmanumRE23}, i.e. we show that the number of RE classes of extreme points of the polytope $P_{L}^{2,3}$ is given by 
\bea 2^{{\frac{L-1}{2}+\frac{3(3^L-3)}{4}}-1} + \frac{1}{6}[ 2^{3\frac{3^L-3}{2}} + 2^{\frac{3^L-3}{2}+1} ].\eea

\begin{proof}
As mentioned in the main text, we  consider the ORE classes and impose then the conditions for the relabeling of the inputs. 
Hence, the relevant symmetry group has the following elements \begin{equation}
\begin{split}&e\\
&S_{12}\\
&S_{23}\\
&S_{13}=S_{23} S_{12} S_{23} = S_{12} S_{23} S_{12}\\
&\sigma_{123} = S_{12} S_{23} = S_{23} S_{13}\\
&\sigma_{132} = S_{13} S_{23} = S_{23} S_{12}.
\end{split}
\end{equation}
The total number of equivalence classes can  be written as
\begin{equation}
N^{(L)} = N_{\rm I}^{(L)} + N_{S}^{(L)} + N_{\sigma}^{(L)} + N_{\rm N}^{(L)},
\end{equation}
where $N_{\rm I}^{(L)} , N_{\rm N}^{(L)} $ are the number of orbits of vectors that are invariant or non-invariant under the whole symmetry group respectively, and $N_{S}^{(L)} $ ($N_{\sigma}^{(L)} $) count the orbits of vectors invariant under only one of the $S_{ij}$ (vectors invariant only under  $\sigma_{123}$ or $\sigma_{132}$) respectively. Notice that $S_{ij}^2=e$ and that $\sigma_{123} \sigma_{132} = \sigma_{132} \sigma_{123}=e$. 

Let us start with the set of invariant vectors. For $S=3$ the possible independent sequences of settings, i.e., sequences that are not generated one from the other by exchanging some settings, are given by:
\begin{equation}
l=1: 
\begin{array}{c} 
X
\end{array}; \qquad
l=2: \begin{array}{c} 
XX\\ XY
\end{array};\qquad
l=3: \begin{array}{c} 
XXX\\ XXY\\ XYX\\ XYY\\ XYZ
\end{array};
\end{equation}
It is clear that, since $S=3$, at each step the sequence of all identical measurements  generate only two new sequences, i.e., $XXX\ldots X$ and $XXX\ldots XY$, whereas all the other  generate three new ones. We can then count the number of such sequences as $Q_1=1$ and $Q_m=3(Q_{m-1}-1)+2= 3 Q_{m-1} -1$, giving 
\begin{equation}
Q_m=\frac{1}{2}(1+3^{m-1}).
\end{equation}
At each step $m$, then, we  need to choose $2^{Q_m}$ possible values, i.e., two values for each extra measurement setting added, and we do not count the step $m=1$, since this is fixed by the outcome relabeling symmetry. We then have
\begin{equation}
N_{\rm I}^{(L)}=\prod_{i=2}^{L} 2^{Q_i}=2^{\sum_{i=2}^L \frac{1}{2}(1+3^{i-1})}= 2^{\frac{1}{2}(L-1) + \frac{3^L-3}{4}}.
\end{equation}

For calculating $N_S^{(L)}$, we can first observe that if a vector $v$ is invariant under the action of $S_{12}$, i.e., $S_{12}v=v$, then the corresponding orbit is given by $O_v=\{ v_{12}, v_{23}, v_{13}\}$, where $v_{ij}$ is a vector invariant under the action of $S_{ij}$. It is sufficient to look at the case of $S_{12}$. If $S_{12}v = v$, we define $v_{12} := v$. Then, by action of the group we obtain $v_{23}:= S_{13} v$ and $v_{13}:=S_{23} v$. It can be straightforwardly verified that $v_{23}$ is invariant under the action of $S_{23}$. In fact, $S_{23} v_{23} = S_{23} S_{13} v = S_{23} S_{23} S_{12} S_{23} v = S_{12} S_{23} v = S_{12} S_{23} S_{12} v= S_{13} v = v_{23}$. A similar argument shows that $v_{13}$ is invariant under $S_{13}$. Hence, each orbit of a vector invariant under exactly on $S_{ij}$ generate the other two invariant vectors. We now need to count the number of representatives of such orbits.
\begin{equation}\label{eq:frac_s}
N_{S}^{(L)}= \td{N}_{S}^{(L)} - N_{\rm I}^{(L)}
\end{equation}
where $N_{S}^{(L)}$ is the number of representative vectors invariant under exactly one $S_{ij}$, but not fully invariant, $\td{N}_{S}^{(L)}$ is the number of vectors invariant under at least one of the $S_{ij}$, but possibly fully invariant. If the vectors are invariant under two $S_{ij}$, they are completely invariant, hence they are counted in $N_{\rm I}^{(L)}$. Note that the number of vectors invariant under exactly one $S_{ij}$ is given by $3 N_{S}^{(L)}$, since each equivalence class contains three vectors.

For each sequence, say $M_1M_1M_2M_3M_1M_2$, if we fix the outcomes of the sequences and then swap $M_1\leftrightarrow M_2$, we obtain another sequence. This applies to all of them, except the sequence $M_3M_3M_3M_3\ldots M_3$ which remains invariant. This implies that, assuming to be invariant up to step $m-1$, at step $m$ we  have $(2^3)^{\frac{3^{m-1}-1}{2}} 2^2$ choices of new outcomes that keep the sequence invariant under relabeling of settings (see also Fig. \ref{fig8}). Namely,  the number of tuples for which one can assign in time step $m$ one out of $(2^3)$ possibilities is divided by two, except for $M_3M_3M_3M_3\ldots M_3$, leading to $(2^3)^{\frac{3^{m-1}-1}{2}}$ ways of extending these sequences from time step $m-1$ to time step $m$. For the case $M_3M_3M_3M_3\ldots M_3(M_1,M_2,M_3)$ there are $2^2$ possible assignments, giving in total $(2^3)^{\frac{3^{m-1}-1}{2}} 2^2=(2^3)^{\frac{3^{m-1}-1}{2}+\frac{2}{3}}$. 

\begin{figure}[h]
\begin{center}
\includegraphics[height=0.4\columnwidth]{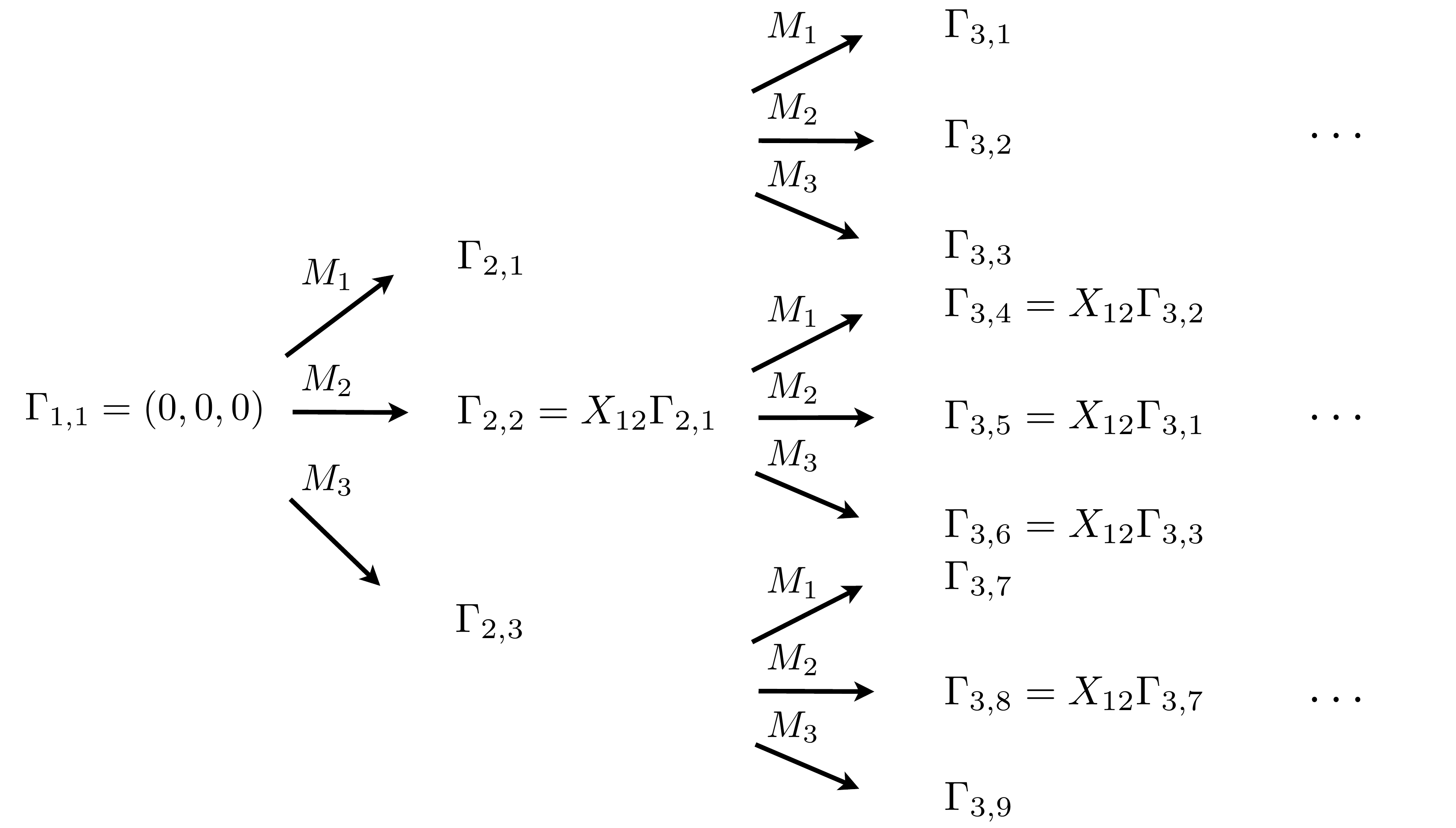} 
\end{center}
\caption{In this figure the tuples that have to be equal for an extreme point that is invariant under $S_{12}$ in the scenario of three measurement settings with each two outcomes are indicated. Here $X_{12}$ is an operator that permutes the outcomes of the measurements $M_1$ and $M_2$ in the tuples.}
\label{fig8}
\end{figure} 

Again, we do not count the step $m=1$ since it is fixed by outcome relabeling symmetry. We can then compute
\begin{equation}
\td{N}_S^{(L)} = \prod_{i=2}^L (2^3)^{\frac{3^{m-1}-1}{2}+\frac{2}{3}}= 2^{{\frac{L-1}{2}+\frac{3(3^L-3)}{4}}} 
\end{equation}

To compute $N_\sigma^{(L)}$ we need to consider the orbits of vectors invariant under $\sigma_{123}$. First notice that $\sigma_{123} v = v \Leftrightarrow \sigma_{132} v = v$ since $\sigma_{123}\sigma_{132}= \sigma_{132}\sigma_{123}=e$. We now prove that orbits are given by either $O_v=\{v, S_{ij}v\}$ for vectors invariant only under $\sigma_{ijk}$ and $O_v= \{v\}$ for vectors invariant under $\sigma_{ijk}$ and one of the $S_{kl}$. In fact, $\sigma_{123} v = v$ implies $S_{12} v = S_{12} S_{12} S_{23} v = S_{23} v = S_{23} S_{12} S_{23} v= S_{13} v$. Hence, if $S_{ij} v = v$ for any $ij$, then $v$ is invariant under the action of the whole group, i.e., $O_v=\{ v \}$. Otherwise, we obtain the orbit $O_v= \{ v, S_{ij} v\}$, where $S_{ij} v$ is the same vector for $ij=12,23,13$. It is important to notice, however, that if $v$ is invariant, i.e., $\sigma_{123}v=v$, then also $S_{ij}v$ is invariant, i.e., $\sigma_{123} S_{ij}v= S_{ij}v$. For instance, for $ij=12$, we have $\sigma_{123} S_{12} v = S_{12} S_{23} S_
{12} v = S_{12}  \sigma_{132} v = S_{12} v$. Analogous arguments apply to the case $ij=13,23$. This implies that each orbit contains two invariant vectors.

Defining the number of all such vectors as $\td{N}_\sigma^{(L)}$ one can count analogously as for the case of $N_{S}^L$ 
\begin{equation}\label{eq:frac_sig}
{N}_\sigma^{(L)} =  \frac{1}{2}(\td{N}_\sigma^{(L)} - N_I^{(L)})
\end{equation}
with the same notation as in Eq.~\eqref{eq:frac_s} and the factor $1/2$ coming from the fact that each orbit contains two invariant vectors. Then, $\td{N}_\sigma^{(L)}$ can then be computed as
follows. Given that a sequence of measurement outcomes is symmetric under cyclic permutation up to the step $m-1$, there are $2^{3^{m-1}}$ ways of completing it while still keeping it symmetric (see also Fig. \ref{fig9}). 

\begin{figure}[h]
\begin{center}
\includegraphics[height=0.4\columnwidth]{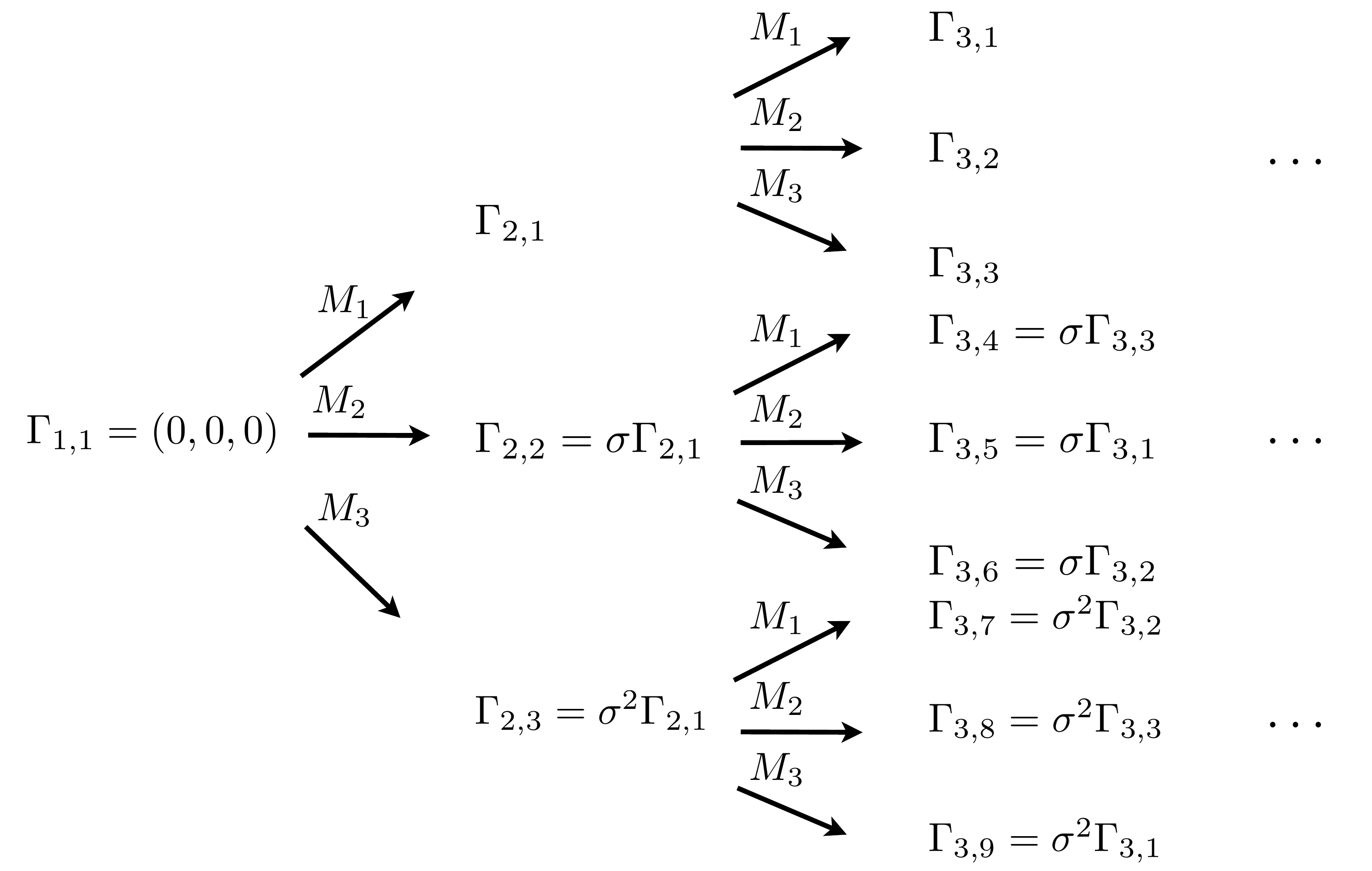} 
\end{center}
\caption{In this figure the tuples that have to be equal for an extreme point that is invariant under $\sigma_{123} $ in the scenario of three measurement settings with each two outcomes are indicated. Here $\sigma$ is an operator that permutes the outcomes of the measurements in the tuples in accordance with $M_1\rightarrow M_2\rightarrow M_3\rightarrow M_1$ and $\sigma^2$ indicates that $\sigma$ is applied twice.}
\label{fig9}
\end{figure} 

Hence, we have that \begin{equation}
\td{N}_{\sigma}^{(L)}=\prod_{i=2}^L  2^{3^{m-1}}=  2^{\frac{3^L-3}{2}}.
\end{equation}
Finally, we need to compute the number of orbits for vectors that are not invariant under any permutation. These can be obtained by removing all invariant ones from the total and divide by six, i.e., the number of vectors for each orbit, namely
\begin{equation}
\begin{split}
N_{\rm N}^{(L)} = \frac{1}{6}\left[N_{\rm ORE}^{(L)}  - (N_{\rm I}^{(L)}  + 3 N_{\rm S}^{(L)}  + 2 N_{\sigma}^{(L)} )\right]=\frac{1}{6}\left[N_{\rm ORE}^{(L)} -3\td{N}_S^{(L)}  - \td{N}_\sigma^{(L)}  + 3 N_{\rm I}^{(L)}  \right]
\end{split}
\end{equation}
Finally, we have \begin{equation}\label{eq:num_classS3}
\begin{split}
N^{(L)}  = \left[N_I^{(L)}  + (\td{N}_{S}^{(L)}  - N_{\rm I}^{(L)} ) +\frac{1}{2} ( \td{N}_\sigma^{(L)}  - N_I^{(L)} )  + \frac{1}{6}(N_{\rm ORE}^{(L)} -3\td{N}_S^{(L)}  - \td{N}_\sigma^{(L)}  + 3 N_{\rm I}^{(L)} )\right]\\
=\left[\frac{N_{\rm ORE}^{(L)} }{6} + \frac{1}{2}\td{N}_{S}^{(L)}  + \frac{2}{6} \td{N}_\sigma^{(L)}  \right]= 2^{{\frac{L-1}{2}+\frac{3(3^L-3)}{4}}-1} + \frac{1}{6}[ 2^{3\frac{3^L-3}{2}} + 2^{\frac{3^L-3}{2}+1}  ],
\end{split}
\end{equation}
wich proves the lemma.
\end{proof}

\section{Proof of Observation \ref{obs:orth_range}\label{AppendObs}}
For completeness we  prove here Observation \ref{obs:orth_range} which is a well known result in quantum state discrimination (cf., e.g., Ref.~\cite{NC_book} Ch. 9) and which is used in the main text to identify the minimal dimension of a quantum system that is required to reach a given extreme point of $P_{L}^{O,S}$. We first show the following lemma which straightforwardly extends to Observation \ref{obs:orth_range}.
\begin{lemma}\label{lemprob1}
Let $E$ be an effect of a POVM, i.e. $E\geq 0$ and $E\leq\one$, and $\rho=\sum_{i\in I} p_i\kb{\Psi_i}{\Psi_i}$ with $p_i>0$ be the spectral decomposition of a density matrix $\rho$. Then $\tr \{\rho E\}=1$ iff $E=\sum_{i\in I} \kb{\Psi_i}{\Psi_i}+\sum_{k,k'\in K} c_{k,k'}\kb{\Psi_k}{\Psi_{k'}}$ with $I\cap K=\{\}$ and $\{\ket{\Psi_i}\}_{i\in I}\cup \{\ket{\Psi_k}\}_{k\in K}$ being an ONB. The matrix $E_K =\sum_{k,k'\in K}  c_{k,k'}\kb{\Psi_k}{\Psi_{k'}}$ is positive semidefinite and $E_K\leq \one$.
\end{lemma}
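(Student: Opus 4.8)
The plan is to establish both directions of the equivalence by working in the spectral/ONB picture of $\rho$, with the one genuinely analytic ingredient being that a positive semidefinite operator annihilates every vector on which its expectation value vanishes. Everything else reduces to bookkeeping of matrix elements in a suitable orthonormal basis.

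First I would extend the orthonormal eigenvectors $\{\ket{\Psi_i}\}_{i\in I}$ of $\rho$ (those with $p_i>0$) to an orthonormal basis $\{\ket{\Psi_i}\}_{i\in I}\cup\{\ket{\Psi_k}\}_{k\in K}$ of the whole space, with $I\cap K=\emptyset$. Since $\rho$ is a density matrix, $\sum_{i\in I}p_i=\tr\rho=1$, and I can write $\tr\{\rho E\}=\sum_{i\in I}p_i\bra{\Psi_i}E\ket{\Psi_i}$. Bounding each expectation by $\bra{\Psi_i}E\ket{\Psi_i}\le 1$, which follows from $E\le\one$, gives $\tr\{\rho E\}\le\sum_{i\in I}p_i=1$. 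Hence the hypothesis $\tr\{\rho E\}=1$ forces $\bra{\Psi_i}E\ket{\Psi_i}=1$ for every $i\in I$, because all $p_i>0$.

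The central step is to upgrade this scalar equality to an eigenvector statement. Rewriting it as $\bra{\Psi_i}(\one-E)\ket{\Psi_i}=0$ and using $\one-E\ge 0$, I would invoke the standard fact that a positive semidefinite operator $A$ satisfies $\bra{\psi}A\ket{\psi}=0\Rightarrow A\ket{\psi}=0$ (factor $A=B^{\dagger}B$, so the expectation equals $\|B\ket{\psi}\|^2$). This yields $(\one-E)\ket{\Psi_i}=0$, i.e. $E\ket{\Psi_i}=\ket{\Psi_i}$ for all $i\in I$. Computing the matrix elements of $E$ in the chosen ONB then pins down its form: the $I$–$I$ block is the identity; the mixed blocks vanish since $\bra{\Psi_k}E\ket{\Psi_i}=\braket{\Psi_k}{\Psi_i}=0$ for $k\in K,\,i\in I$, with Hermiticity handling the transposed entries; and only the free $K$–$K$ block $E_K=\sum_{k,k'\in K}c_{k,k'}\kb{\Psi_k}{\Psi_{k'}}$ survives. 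This is exactly the claimed decomposition.

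For the converse I would simply observe that any $E$ of the stated form satisfies $E\ket{\Psi_i}=\ket{\Psi_i}$ for $i\in I$, since $E_K$ is supported on $\HH_K=\mathrm{span}\{\ket{\Psi_k}\}_{k\in K}$, whence $\tr\{\rho E\}=\sum_{i\in I}p_i=1$. Finally, the asserted properties of $E_K$ follow from noting that $E$ is block diagonal with respect to $\HH=\HH_I\oplus\HH_K$, so $E_K$ is the compression $P_K E P_K$ of $E$ to $\HH_K$, where $P_K$ is the orthogonal projector onto $\HH_K$; the inequalities $0\le E\le\one$ then restrict to $0\le E_K\le\one$. I expect the only delicate point to be the third step — passing from the vanishing expectation value to the eigenvalue equation $E\ket{\Psi_i}=\ket{\Psi_i}$ — while the basis extension and the matrix-element computation are routine.
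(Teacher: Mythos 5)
Your proof is correct, and its overall skeleton---extend the eigenvectors of $\rho$ to an ONB, show the diagonal $I$-block of $E$ is saturated, then kill the remaining matrix elements involving $I$---matches the paper's. Where you genuinely diverge is the mechanism for that last step. The paper stays entirely at the level of matrix elements: writing $E=\sum_{l,l'\in I\cup K}c_{l,l'}\kb{\Psi_l}{\Psi_{l'}}$, it deduces $c_{ii}=1$ for $i\in I$ exactly as you do, and then invokes $\bra{\Psi_i}E^\dagger E\ket{\Psi_i}=\sum_{l}|c_{li}|^2\leq 1$ (which rests on the auxiliary fact that $0\leq E\leq\one$ implies $E^\dagger E\leq\one$) to force $c_{li}=0$ for all $l\neq i$. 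You instead pass through the operator identity $(\one-E)\ket{\Psi_i}=0$, obtained from the standard lemma that a positive semidefinite operator annihilates any vector on which its expectation vanishes; the vanishing of the mixed matrix elements then drops out of the eigenvector equation $E\ket{\Psi_i}=\ket{\Psi_i}$ with no further inequality. Your route is marginally cleaner: it isolates the single analytic ingredient and does not need the $E^2\leq\one$ step, whereas the paper's argument is a self-contained column-norm computation. Your handling of the converse and of the properties of $E_K$ (identifying it as the compression $P_K E P_K$ and restricting $0\leq E\leq\one$ to the $K$-block) is also correct and slightly more explicit than the paper's, which simply notes that these properties follow immediately from $E\geq 0$ and $E\leq\one$.
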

\begin{proof}
\emph{If:} Inserting $E=\sum_{i\in I} \kb{\Psi_i}{\Psi_i}+\sum_{k,k'\in K} c_{k,k'}\kb{\Psi_k}{\Psi_{k'}}$ in $\tr \{\rho E\}$ and using that $\{\ket{\Psi_i}\}_{i\in I}\cup \{\ket{\Psi_k}\}_{k\in K}$ is an ONB as well as that $\sum_{i\in I} p_i=1$ readily proves the statement.\\
\emph{Only if:} Writing $E$ in the basis $\{\ket{\Psi_i}\}_{i\in I}\cup \{\ket{\Psi_k}\}_{k\in K}$, i.e. $E=\sum_{l,l'\in I\cup K}c_{l,l'}\kb{\Psi_l}{\Psi_l'}$, and inserting in $\tr \{\rho E\}=1$ one obtains that $\sum_{i\in I} c_{ii}p_i=1$.
As $0\leq E\leq \openone$ it holds that $0\leq c_{ii}\leq 1$. Moreover, using that $\sum_{i\in I} p_i=1$ and $p_i>0$ it therefore follows that $c_{ii}=1$ $\forall i \in I$. It can be easily seen that this condition and $E\leq \one$ can only be simultaneously fulfilled if $c_{ik}=0$ for $i\in I$ and $k\neq i$. More precisely, due to $E\leq \one$ it has to hold that $\bra{\Psi_i}E^\dagger E\ket{\Psi_i}=\sum_{k\in K\cup I} |c_{ki}|^2\leq 1$. As $c_{ii}=1$ for $i\in I$ we have that $c_{ki}=c_{ik}^*=0$ for $k\neq i$. Hence, $E$ is of the form $E=\sum_{i\in I} \kb{\Psi_i}{\Psi_i}+\sum_{k,k'\in K} c_{k,k'}\kb{\Psi_k}{\Psi_{k'}}$. Note that it follows immediately from $E\geq 0$ and $E\leq \one$ that $E_K =\sum_{k,k'\in K}  c_{k,k'}\kb{\Psi_k}{\Psi_{k'}}\geq 0$ and $E_K\leq \one$.
\end{proof}
It follows that states giving, with probability one, different outcomes for the same sequence of measurements have ranges corresponding to orthogonal subspaces. 
\\ \\
\noindent{\it{\bf Observation \ref{obs:orth_range}.} Let $E$ be an effect of a POVM and $\rho_1=\sum_{i\in I} p_i\kb{\Psi_i}{\Psi_i}$ with $p_i>0$ ($\rho_2=\sum_{l\in L} q_l\kb{\Phi_l}{\Phi_l}$ with $q_l>0$) the spectral decomposition of a density matrix $\rho_1$ ($\rho_2$) respectively. Then $\tr \{\rho_1 E\}=1$ and $\tr \{\rho_2 E\}=0$ only if $\braket{\Psi_i}{\Phi_l}=0$ for all $ i\in I$ and $l\in L$. }
\begin{proof}Using Lemma~\ref{lemprob1} if follows from $\tr \{\rho_1 E\}=1$ that $E=\sum_{i\in I} \kb{\Psi_i}{\Psi_i}+\sum_{k,k'\in K} c_{k,k'}\kb{\Psi_k}{\Psi_{k'}}$ with $I\cap K=\{\}$ and $\{\ket{\Psi_i}\}_{i\in I}\cup \{\ket{\Psi_k}\}_{k\in K}$ being an ONB. Denoting $\sum_{k,k'\in K} c_{k,k'}\kb{\Psi_k}{\Psi_{k'}}$ by $E_K$ we have that $\tr \{\rho_2 E\}=\sum_{i\in I}\bra{\Psi_i}\rho_2\ket{\Psi_i}+\tr \{E_K \rho_2\}=0$. As $\rho_2\geq 0$ and $E_K\geq 0$ we have that $\bra{\Psi_i}\rho_2\ket{\Psi_i}=0$ $\forall i\in I$ and it can be easily seen that this implies $\braket{\Psi_i}{\Phi_l}=0$ for all $ i\in I$ and $l\in L$.
\end{proof}

\section{A potentially improved lower bound on the dimension needed to realize any extreme point}\label{sec:appA}
In the main text we discussed a way to construct an extreme point that yields a lower bound on the necessary dimension to realize any extreme point. For any $O$ and $S$ this construction allowed to provide a closed formula for the scaling of the bound with respect to the length of the sequence. Here we  discuss a different construction which gives a potentially better lower bound.
In order to do so we consider the following extreme point. All tuples that are assigned to a time step $j<k$ correspond to $(0, 0, \ldots, 0)$, i.e. for the first $k-1$ time steps one obtains outcome ``$0$'' for all settings. In time step $t_k$ emerging subtrees $T_{k,m}$ have the property that in the root node all settings yield outcome ``$0$'', however in the second time step at least one of the tuples is not of the form $(0, 0, \ldots, 0)$. Moreover, all of these subtrees are chosen to be different, see Fig. \ref{fig10}. 
\begin{figure}
\begin{center}
\includegraphics[width=0.9\columnwidth]{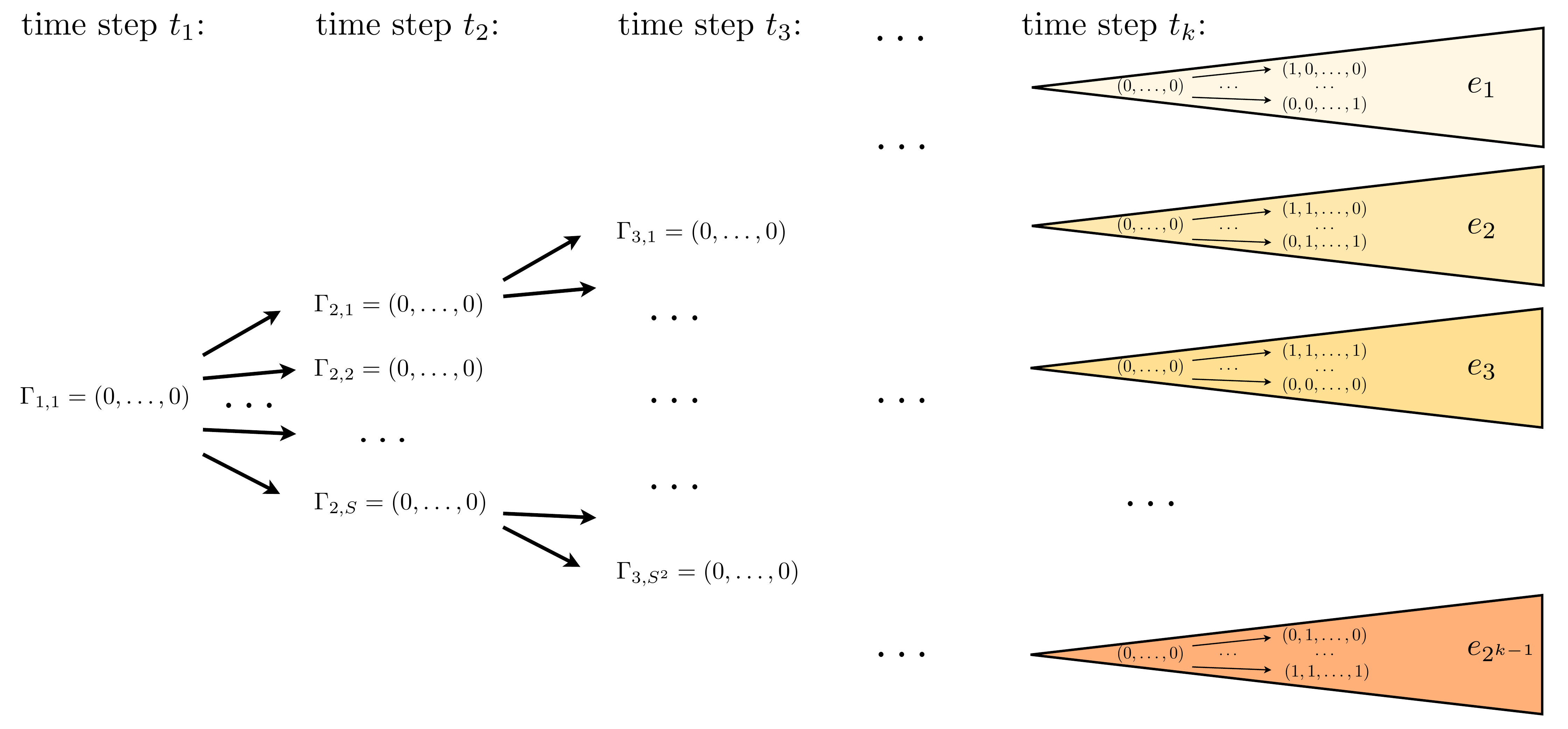} 
\end{center}
\caption{This figure illustrates the idea of the construction of the extreme point that allows to obtain a lower bound on the scaling. Here the subtrees $e_i$ have the following properties: a) In the first time step all measurements yield outcome ``$0$''. b) They are all chosen to be different. c) At least one tuple assigned to the second time step does not correspond to $(0, 0, \ldots,0)$.}
\label{fig10}
\end{figure} 

Note that therefore all possible futures assigned to a time step $i\leq k$ are not equivalent to each other. As discussed in the proof of Theorem \ref{theomindimgen} the number of inequivalent futures corresponds to the necessary dimension. Hence, one obtains straightforwardly a lower bound on the dimension given by $\sum_{i=1}^k S^{i-1}$. In order to obtain the best possible bound of this form it remains  to identify the largest $k$ for which such a construction is possible. Recall that $S^{k-1}$ is the number of futures that can be assigned to time step $t_k$, $L-k$ is the length of these futures and $(O^S)^{\frac{S^{L-k+1}-S}{S-1}}$ is the number of different futures of length $L-k$ for which the starting node is given by $(0, 0, \ldots, 0)$. The latter can be shown analogously to the proof of Lemma 1 which can be straightforwardly extended to an arbitrary number of outcomes. Note, however,  that for an arbitrary $O$ the condition that the first tuple corresponds to $(0, 0, \ldots, 0)$ does not uniquely identify one 
element of an ORE class.The number of futures of length $L-k$  for which all tuples in the second time step are equal to $(0, 0, \ldots, 0)$ is given by \begin{equation} (O^S)^{\frac{S^{L-k+1}-S^2}{S-1}}=\frac{(O^S)^{\frac{S^{L-k+1}-S}{S-1}}}{(O^S)^{S}}.\end{equation} This is due to the fact that the number of different possibilities to assign tuples in the second time step is $(O^S)^{S}$. With this it follows that one has to identify the largest natural number $k$ such that $k\leq j$ and 
\begin{equation}\label{eqAppA}
S^{j-1} \leq  (O^S)^{\frac{S^{L-j+1}-S}{S-1}}-(O^S)^{\frac{S^{L-j+1}-S^2}{S-1}}
\end{equation}
with $j\in \mathbb{R}$. Given $S$ and $L$ one can obtain $j$ for example graphically by determining the zero of $(O^S)^{\frac{S^{L-j+1}-S}{S-1}}-(O^S)^{\frac{S^{L-j+1}-S^2}{S-1}}-S^{j-1}$ which is monotonically decreasing as a function of $j$ and compute straightforwardly $k$. The lower bound on the dimension is then given by $\sum_{i=1}^k S^{i-1}=\frac{S^k-1}{S-1}$. Compared to the construction presented in the main text it may be that the $k$ obtained in the way presented here is smaller, however here all subtrees up to time step $t_k$ are accounted for and not only the ones assigned to this time step.  Hence, for certain scenario the lower bound can be improved.

\bibliographystyle{apsrev4-1}
\bibliography{Sim_Ex_TC}{}

\end{document}